\documentclass[11pt]{article}
\usepackage[a4paper, margin=1in]{geometry}
\usepackage{amsmath, amssymb, amsthm}
\usepackage{graphicx}
\usepackage{hyperref}
\usepackage{titling}
\usepackage{tocloft}
\usepackage{xcolor}
\usepackage{slashed}
\usepackage{geometry}
\usepackage{longtable}
\usepackage{caption}
\usepackage{tabularx}
\usepackage{mathtools}
\usepackage{algorithm}
\usepackage{algorithmic}
\usepackage[all,cmtip]{xy} 

\newtheorem{theorem}{Theorem}
\newtheorem{lemma}{Lemma}
\newtheorem{corollary}{Corollary}
\newtheorem{definition}{Definition}
\newtheorem{conjecture}{Conjecture}
\newtheorem{proposition}{Proposition}

\newtheorem{remark}{Remark}



\hypersetup{
    colorlinks=true,
    linkcolor=blue,
    filecolor=magenta,
    urlcolor=blue,
    citecolor=blue,
    pdftitle={On the Holographic Geometry of Deterministic Computation},
    pdfpagemode=FullScreen,
}

\title{\Large \textbf{On the Holographic Geometry of Deterministic Computation}\\[0.5in]}
\author{\Large Logan Nye, MD\\[0.3in]
Carnegie Mellon University School of Computer Science\\[0.1cm]
5000 Forbes Ave Pittsburgh, PA 15213 USA\\[0.5cm]
\texttt{lnye@andrew.cmu.edu}\\[0.2cm]
\small ORCID: \href{https://orcid.org/0009-0002-9136-045X}{0009-0002-9136-045X}}
\date{}

\begin{document}

\maketitle
\vspace{0.8in}

\begin{center}
    \Large \textbf{Abstract}
\end{center}
\vspace{0.3in}
Standard simulations of Turing machines suggest a linear relationship between the temporal duration $t$ of a run and the amount of information that must be stored by known simulations to certify, verify, or regenerate the configuration at time $t$. For deterministic multitape Turing machines over a fixed finite alphabet, this apparent linear dependence is not intrinsic: any length-$t$ run can be simulated using $O(\sqrt{t})$ work-tape cells via a Height Compression Theorem for succinct computation trees together with an Algebraic Replay Engine. In this paper we recast that construction in geometric and information-theoretic language. We interpret the execution trace as a spacetime DAG of local update events and exhibit a family of recursively defined \emph{holographic boundary summaries} such that, along the square-root-space simulation, the total description length of all boundary data stored at any time is $O(\sqrt{t})$. Using Kolmogorov complexity, we prove that every internal configuration has \emph{constant} conditional description complexity given the appropriate boundary summary and time index, establishing that the spacetime bulk carries no additional algorithmic information beyond its boundary. We express this as a one-dimensional \emph{computational area law}: there exists a simulation in which the information capacity of the active ``holographic screen'' needed to generate a spacetime region of volume proportional to $t$ is bounded by $O(\sqrt{t})$. In this precise sense, deterministic computation on a one-dimensional work tape admits a holographic representation, with the bulk history algebraically determined by data residing on a lower-dimensional boundary screen.

\newpage

\section{Introduction}
\label{sec:intro}

The tradeoff between deterministic time and space on Turing machines is a foundational theme in complexity theory, with a long history of upper and lower bounds relating these resources~\cite{HopcroftPaulValiant1977,PaulPippengerSzemerediTrotter1983,AroraBarak2009}. Together with classical simulations such as Savitch's theorem relating nondeterministic and deterministic space~\cite{Savitch1970}, these results establish that many natural problems admit nontrivial transformations between time and space bounds. More recently, for deterministic multitape Turing machines it has been shown that a time-$t$ computation can be simulated in space $O\!\bigl(\sqrt{t \log t}\bigr)$~\cite{Williams2025}, sharpening the classical $O(t / \log t)$ bound.

In a companion work (henceforth \emph{Height Compression}) we prove a stronger square-root space simulation theorem
\[
\mathrm{TIME}[t] \;\subseteq\; \mathrm{SPACE}\bigl(O(\sqrt{t})\bigr)
\]
for deterministic multitape Turing machines, where space is measured in tape cells over a fixed finite alphabet~\cite{HeightCompression}. The proof proceeds via three ingredients: a \emph{Height Compression Theorem} for succinct computation trees, an \emph{Algebraic Replay Engine} that regenerates local configurations from short summaries, and a \emph{rolling boundary} discipline for traversing the compressed tree. Informally, the companion paper shows that the entire computation history of a deterministic run can be regenerated from a carefully chosen sequence of low-dimensional \emph{boundary summaries}, using only $O(\sqrt{t})$ active tape cells at any moment.

The present paper develops a geometric and information-theoretic interpretation of this phenomenon. We treat the execution of a deterministic machine not only as a linear sequence of configurations, but as a finite directed acyclic graph (DAG) that we regard as a discrete \emph{spacetime} object, equipped with a combinatorial notion of locality induced by the machine's transition rules. We then reinterpret the technical machinery of height compression in geometric language:

\begin{itemize}
    \item We repackage the \emph{interval summaries} of height compression as \emph{holographic boundary states} that encode all information flowing into and out of a spacetime sub-region.
    \item We reinterpret the height-compressed computation tree as a \emph{static causal tree} of spacetime volumes, on which linear time appears as a particular depth-first traversal.
    \item We define an \emph{active holographic screen} consisting of the boundary states and local replay window maintained by the simulator, and we show that its size satisfies a one-dimensional area law: the maximum screen size over the run is $O(\sqrt{t})$.
\end{itemize}

On top of these formal correspondences, we prove a precise information-theoretic statement: the \emph{bulk} of the spacetime object has $O(1)$ conditional Kolmogorov complexity relative to its boundary summaries. That is, once the machine and the boundary data of a block are fixed, any internal configuration can be produced by a fixed, constant-complexity procedure. In this sense, the interior of a deterministic computation history is an \emph{information-theoretic vacuum}: all nontrivial information resides in its boundary data, and the bulk is an algebraically determined evaluation trace.

Finally, we formulate conjectural extensions of the area law to higher-dimensional memory architectures, and we discuss analogies with holography in quantum gravity and with area laws for entanglement in many-body systems~\cite{Bekenstein1973,tHooft1993,Susskind1995,RyuTakayanagi2006,EisertCramerPlenio2010}. We distinguish carefully between rigorous theorems (which are purely combinatorial and information-theoretic) and speculative analogies.

\medskip

\section{Preliminaries from Height Compression}
\label{sec:prelims}

We briefly recall the core constructions from the height-compression technique, specializing to the aspects needed for the present work. For background on standard models of computation and complexity-theoretic notation, see, e.g.,~\cite{AroraBarak2009}. For Kolmogorov complexity and encoding conventions, we follow~\cite{LiVitanyi2008} and make our choices explicit in Appendix~\ref{app:encodings}.

\subsection{Machine model and spacetime diagram}
\label{subsec:machine-model}

We fix a deterministic multitape Turing machine
\[
M = (Q, \Sigma, \Gamma, \delta, q_0, q_{\mathrm{acc}}, q_{\mathrm{rej}}),
\]
with finite state set $Q$, input alphabet $\Sigma$, work alphabet $\Gamma \supseteq \Sigma$, and transition function
\[
\delta : Q \times \Gamma^k \to Q \times \Gamma^k \times \{-1,0,+1\}^k
\]
for some fixed number of tapes $k \ge 1$. We assume the standard \emph{Lipschitz locality} of head motion: in one time step, each head moves by at most one cell.

A run of $M$ of length $t$ is a sequence of configurations
\[
C_0, C_1, \dots, C_t,
\]
where each $C_\tau$ encodes the tape contents, head positions, and control state at time $\tau$. The \emph{spacetime diagram} of the run is the finite directed acyclic graph
\[
\mathcal{M}_{\mathrm{raw}} = (V,E),
\]
where $V$ consists of all local degrees of freedom (e.g., tape cells with time labels and the control state) and $E$ consists of directed edges representing the causal dependencies induced by $\delta$ between consecutive configurations. For standard one-dimensional tapes we have $V \subseteq \mathbb{Z}^{1+1}$; more generally one may consider higher-dimensional tape lattices.

We will colloquially refer to $\mathcal{M}_{\mathrm{raw}}$ as a ``spacetime manifold'', but for our purposes it is simply a finite DAG equipped with the adjacency relation defined by $\delta$.

\subsection{Block-respecting runs and time-blocks}
\label{subsec:block-respecting}

Fix a \emph{block size} parameter $b \in \mathbb{N}$. We partition the time indices $\{1,\dots,t\}$ into $T = \lceil t/b \rceil$ contiguous \emph{time-blocks}
\[
I_k = \bigl[(k-1)b + 1,\, \min\{kb,t\}\bigr]
\quad \text{for } k = 1, \dots, T.
\]

We will use the following notion of block locality, which is the one realized by the height-compression construction in~\cite{HeightCompression}.

\begin{definition}[Block-respecting run]
\label{def:block-respecting}
Let $b \in \mathbb{N}$ and let $c_{\mathrm{int}} \ge 1$ be a fixed constant. A length-$t$ run of $M$ is \emph{block-respecting with parameters $(b,c_{\mathrm{int}})$} if, for every time-block $I_k$,
all interactions between the computation inside $I_k$ and the rest of the run occur through an \emph{interface window} of size at most $c_{\mathrm{int}} \cdot b$ on the work tape, located in the spacelike slices at the temporal boundaries of $I_k$.

Concretely, the interface for $I_k$ consists of:
\begin{itemize}
    \item the machine state and head positions at the entry and exit times of $I_k$, and
    \item the contents of an interval of at most $c_{\mathrm{int}} \cdot b$ tape cells containing all cells visited by the heads during $I_k$.
\end{itemize}
\end{definition}

The Height Compression paper~\cite{HeightCompression} shows that for suitably chosen $b$ (as a function of $t$) and after an appropriate preprocessing step, one may assume without loss of generality that a long run is block-respecting in the above sense. We treat that reduction as a black box in this paper for brevity (the full proof details are provided in the companion work). This black-box reduction ensures that the heads do not stray too far within a block and that any long excursions are folded into a structured pattern which can be summarized via a fixed-size interface window.

\subsection{Interval summaries and merge operator}
\label{subsec:summaries}

For any time interval $[L,R]$ that is a union of whole blocks, the height-compression construction of~\cite{HeightCompression} associates a constant-size summary that records all information needed to \emph{interface} that interval with its complement.

\begin{definition}[Interval summary]
\label{def:interval-summary}
Let $[L,R]$ be a time interval that is a union of whole blocks $I_k$. The \emph{interval summary} $\sigma([L,R])$ is a finite record of the form
\[
\sigma([L,R]) = \bigl(q_{\mathrm{in}}, q_{\mathrm{out}},
\vec{h}_{\mathrm{in}}, \vec{h}_{\mathrm{out}},
W_{\mathrm{interface}}\bigr),
\]
where:
\begin{itemize}
    \item $q_{\mathrm{in}}, q_{\mathrm{out}} \in Q$ are the control states at times $L$ and $R$,
    \item $\vec{h}_{\mathrm{in}}, \vec{h}_{\mathrm{out}}$ record the head positions at times $L$ and $R$, and
    \item $W_{\mathrm{interface}}$ encodes the contents of the interface window for $[L,R]$ as in Definition~\ref{def:block-respecting}.
\end{itemize}
The encoding of $\sigma([L,R])$ as a binary string is fixed once and for all; see Appendix~\ref{app:encodings}.
\end{definition}

Two adjacent intervals $[L,M]$ and $[M+1,R]$ can be merged provided their summaries are \emph{compatible} in the sense that the exit data of $[L,M]$ match the entry data of $[M+1,R]$ on the overlapping interface. In that case a constant-space procedure computes the summary of their union.

\begin{definition}[Merge operator]
\label{def:merge}
We say that two adjacent intervals $[L,M]$ and $[M+1,R]$ are \emph{merge-compatible} if their interval summaries satisfy simple syntactic consistency conditions (equality of certain components) specified in~\cite{HeightCompression}. When this holds, the \emph{merge operator}
\[
\oplus : \sigma([L,M]) \times \sigma([M+1,R]) \to \sigma([L,R])
\]
is a deterministic constant-space procedure that outputs the summary of the union:
\[
\sigma([L,R]) = \sigma([L,M]) \oplus \sigma([M+1,R]).
\]
By design, the interface window component of $\sigma([L,R])$ has size at most $c_{\mathrm{int}} \cdot b$.
\end{definition}

\subsection{Height-compressed computation tree}
\label{subsec:height-compressed-tree}

Consider the canonical left-deep computation tree over the $T$ blocks (e.g., the binary tree whose leaves are $I_1,\dots,I_T$ and whose internal nodes represent successive merges of contiguous intervals). The height-compression transformation of~\cite{HeightCompression} reshapes this tree into a balanced binary tree with controlled evaluation depth, without changing the underlying set of leaves or the semantics of the merge operation.

\begin{theorem}[Height Compression, informal]
\label{thm:heightcompression}
There is a logspace-computable transformation that takes the canonical left-deep computation tree for a block-respecting run and produces a balanced binary tree $\mathcal{T}$ with the following properties:
\begin{enumerate}
    \item Each leaf of $\mathcal{T}$ corresponds to a single time-block $I_k$.
    \item Each internal node corresponds to the union of its children's intervals, and is labeled by the merged summary $\sigma([L,R])$ computed via the merge operator $\oplus$.
    \item Along any depth-first evaluation of $\mathcal{T}$, the evaluation stack (i.e., the sequence of active nodes) has length $O(\log T)$, and each internal node can be processed using $O(1)$ additional workspace beyond its summary.
\end{enumerate}
\end{theorem}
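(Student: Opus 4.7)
The plan is to reduce Theorem~\ref{thm:heightcompression} to two largely independent ingredients: (i) the associativity of the merge operator $\oplus$ on merge-compatible sequences of block summaries, which makes the value of any interval summary invariant under re-bracketing; and (ii) the standard observation that a nearly-complete balanced binary tree on $T$ ordered leaves has height $O(\log T)$ and admits a depth-first traversal whose skeleton can be generated in logspace by arithmetic on tree addresses.

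First I would establish associativity. Because $\sigma([L,R])$ is defined purely in terms of the restriction of the run to $[L,R]$ (entry/exit control state, head positions, and the interface window of cells visited while $[L,R]$ communicates with its complement), it is a functional of the interval alone. Hence for any block-aligned partition $L \le M_1 < M_2 \le R$, iterated merging produces the same record regardless of bracketing, i.e., $(\sigma_1 \oplus \sigma_2) \oplus \sigma_3 = \sigma_1 \oplus (\sigma_2 \oplus \sigma_3)$ whenever both sides are defined. Combined with Definition~\ref{def:merge}, this permits us to replace the canonical left-deep tree by any other binary tree on the same ordered leaf sequence $(I_1,\dots,I_T)$ without altering the label at any internal node or at the root.

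Next I would take $\mathcal{T}$ to be the canonical nearly-complete balanced binary tree on $T$ leaves, where the node at address $v$ corresponds to the contiguous subinterval $[L_v, R_v]$ whose children split at $\lfloor (L_v+R_v)/2 \rfloor$ (with the standard power-of-two completion to force depth exactly $\lceil \log_2 T \rceil$). The skeleton of $\mathcal{T}$, as well as the left/right child of any address, is computed from $v$ using $O(\log T)$-bit integer arithmetic, so the transformation itself requires no explicit tree storage and is trivially logspace-computable. A post-order depth-first evaluation then maintains an ancestor stack whose length is bounded by the tree height $O(\log T)$; at each internal node the simulator holds only the two child summaries and invokes $\oplus$, which by Definition~\ref{def:merge} runs in $O(1)$ workspace beyond its inputs. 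Properties 1 and 2 are then immediate from the construction, and property 3 follows from balancedness together with the workspace bound on $\oplus$.

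The main obstacle I anticipate is not the balancing argument itself but the justification that the uniform bound $|W_{\mathrm{interface}}| \le c_{\mathrm{int}} \cdot b$ is preserved as one ascends $\mathcal{T}$. Naively, merging many blocks could visit many cells, so one must verify that the interface window records only the cells that genuinely mediate information between $[L,R]$ and its complement (the persistent boundary data) rather than every cell touched inside the interval. This is a property of the encoding scheme of~\cite{HeightCompression}, which I would invoke as a black box; once that is granted, the three stated properties follow from associativity and the standard logspace construction of a balanced binary tree on $T$ leaves.
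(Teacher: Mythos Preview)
The paper does not prove this theorem; it explicitly imports it from the companion work~\cite{HeightCompression} as a black-box structural result (see the sentence immediately following the theorem: ``The precise statement and proof of Theorem~\ref{thm:heightcompression} are given in~\cite{HeightCompression}; here we use it as a black-box structural theorem.''). There is therefore no in-paper proof to compare your proposal against.

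That said, your sketch is a reasonable outline of the standard argument and is consistent with how the paper treats the result. You correctly isolate the one genuinely nontrivial step: showing that the interface-window size stays uniformly bounded by $c_{\mathrm{int}} \cdot b$ as intervals are merged up the tree. This is not a minor technicality---it is essentially the whole content of the theorem beyond elementary tree balancing. Your associativity-plus-balancing reduction delivers properties~1--3 only \emph{given} that $\oplus$ is a constant-space operation on fixed-size records at every level; but for a contiguous interval $[L,R]$ spanning many blocks, the naive ``all cells visited during $[L,R]$'' window has size $\Theta(R-L)$, not $O(b)$, so a careful definition of $W_{\mathrm{interface}}$ (and a proof that this definition is closed under $\oplus$) is required. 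You defer this to~\cite{HeightCompression}, which is precisely what the present paper does as well, so your treatment is aligned with the paper's: both take the substantive part on faith from the companion work.
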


Thus the full semantic content of the run is represented by a static, balanced tree whose nodes are constant-size or $O(b)$-size summaries, and whose evaluation along any depth-first search (DFS) path has logarithmic height. The precise statement and proof of Theorem~\ref{thm:heightcompression} are given in~\cite{HeightCompression}; here we use it as a black-box structural theorem.

\subsection{Algebraic Replay Engine and rolling boundary}
\label{subsec:are-rolling}

Given an interval $[L,R]$ and its summary $\sigma([L,R])$, the companion work~\cite{HeightCompression} constructs an \emph{Algebraic Replay Engine} (ARE) that regenerates any internal configuration using only local information.

\begin{theorem}[Algebraic Replay Engine, informal]
\label{thm:are}
There exists a constant-degree circuit over a fixed finite field, and an $O(b)$-space Turing machine implementation, such that for every block-respecting interval $[L,R]$ and every $\tau \in [L,R]$, the configuration $C_\tau$ can be computed from $(\sigma([L,R]), \tau)$ by applying this circuit, using $O(b)$ work tape~\cite{HeightCompression}.
\end{theorem}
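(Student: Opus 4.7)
The plan is to treat the summary $\sigma([L,R])$ as a complete specification of a \emph{local} sub-computation whose entire tape access is confined to the interface window $W_{\mathrm{interface}}$ of size $O(b)$, and then to exhibit an explicit $O(b)$-space simulation whose single-step update is a constant-degree polynomial map.

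First I would unpack $\sigma([L,R])$ on the work tape: place the entry state $q_{\mathrm{in}}$ and head positions $\vec{h}_{\mathrm{in}}$ in constant-sized registers, and instantiate $W_{\mathrm{interface}}$ as a contiguous stretch of at most $c_{\mathrm{int}} \cdot b$ cells indexed relative to a local coordinate origin pinned at $\vec{h}_{\mathrm{in}}$. Because the interval is block-respecting, no head excursion during $[L,R]$ can leave this window, so the full local state at time $L$—as far as the sub-run is concerned—is reconstituted from the summary alone, with no residual dependence on global tape contents.

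Next I would present the single-step transition as a constant-degree algebraic circuit. Fix a finite field $\mathbb{F}_q$ with $q \ge |\Gamma| + |Q|$, identify tape symbols, states, and head motions with field elements via a standard indicator encoding, and observe that $\delta$ depends only on the $O(1)$ symbols under the heads and the current state. The resulting update rule on the encoded configuration vector of length $O(b)$ is then a map each of whose output coordinates is a polynomial of constant degree in constantly many input coordinates (the symbol at that cell, the adjacent head-indicator variables, and the state indicators). This yields a constant-degree circuit $\Phi$ implementing one machine step. To compute $C_\tau$ I would iterate $\Phi$ exactly $\tau - L$ times on the unpacked configuration, overwriting the workspace in place; at no point does the stored data exceed $O(b)$ cells, since we hold one encoded configuration of size $O(b)$, a step counter of size $O(\log b)$, and $O(1)$ scratch for evaluating $\Phi$. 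When the counter reaches $\tau - L$ we read off the control state, head positions, and visible window contents to produce $C_\tau$. Correctness follows from the block-respecting invariant, which guarantees that $\Phi$ correctly mirrors the true global transition of $M$ throughout the interval.

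The main obstacle I anticipate is not the mechanical simulation itself but the careful bookkeeping around head coordinates and window alignment: one must show that head positions relative to the local origin remain valid indices into $W_{\mathrm{interface}}$ for all $\tau \in [L,R]$, and that the encoding of $\delta$ as a constant-degree polynomial is well-defined at the geometric boundary of the window (i.e., cells outside the window are provably never read and hence need not be represented). This is precisely where the block-respecting hypothesis of Definition~\ref{def:block-respecting} is consumed; any slack in that hypothesis would force either enlarging the interface window or adding padding cells to keep $\Phi$ well-defined, which would break the $O(b)$ space budget or the constant-degree circuit structure.
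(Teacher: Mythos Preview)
The paper does not contain a proof of this theorem. Theorem~\ref{thm:are} is explicitly imported from the companion work~\cite{HeightCompression} as a black box: Section~\ref{subsec:are-rolling} states it with the informal qualifier and citation, and Appendix~\ref{app:holographic-proofs} opens by declaring that ``we treat Theorems~\ref{thm:heightcompression}, \ref{thm:are}, and~\ref{thm:sqrtbound} from the companion height-compression work~\cite{HeightCompression} as black-box inputs.'' There is therefore nothing in this paper to compare your argument against.

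Evaluating your sketch on its own merits: the approach is sound for the leaf-level case, i.e., when $[L,R]$ is a \emph{single} time-block $I_k$. There Definition~\ref{def:block-respecting} guarantees that the interface window of size $\le c_{\mathrm{int}} b$ contains \emph{all} cells visited during $I_k$, so forward simulation confined to $W_{\mathrm{interface}}$ is well-defined, and your constant-degree encoding of $\delta$ is the standard move.

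However, there is a genuine gap for general block-respecting intervals $[L,R]$ that are unions of many blocks. By Definition~\ref{def:merge}, the merged summary $\sigma([L,R])$ still has an interface window of size $\le c_{\mathrm{int}} b$, but this window encodes only the \emph{boundary} data at the temporal endpoints, not the full tape footprint visited during the interval. For a long interval the heads may touch far more than $O(b)$ cells, so your step-by-step forward replay from $C_L$ cannot stay inside $W_{\mathrm{interface}}$, and the claim ``no head excursion during $[L,R]$ can leave this window'' fails. The actual replay for such intervals must use the recursive tree structure of Theorem~\ref{thm:heightcompression} (descending to leaves and merging), not a flat $\tau - L$-step iteration. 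A secondary issue: your step counter is $O(\log(R-L))$, not $O(\log b)$, and $R-L$ may be as large as $t$; this is harmless when $b \approx \sqrt{t}$ but is not $O(\log b)$ for arbitrary $b$.
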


In the global simulation algorithm, a depth-first traversal of the height-compressed tree $\mathcal{T}$ is combined with a \emph{rolling boundary} discipline: at any moment, the simulator maintains only a small number of interval summaries from the current root-to-leaf path (for instance, the current interval and its parent), together with an $O(b)$-sized replay window on the tape. The total space usage is
\[
O\bigl(b + \log T\bigr),
\]
where the $O(\log T)$ term accounts for bookkeeping information (e.g., indices into $\mathcal{T}$). Choosing $b \approx \sqrt{t}$ yields an $O(\sqrt{t})$ space bound.

\begin{theorem}[Square-root space simulation]
\label{thm:sqrtbound}
For every deterministic multitape Turing machine $M$ and every time bound $t$, there is a simulation that reproduces the length-$t$ run of $M$ using
\[
O\bigl(\sqrt{t}\bigr)
\]
tape cells over a fixed finite alphabet~\cite{HeightCompression}.
\end{theorem}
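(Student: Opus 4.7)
The plan is to assemble the three ingredients of Section~\ref{sec:prelims} --- the block-respecting reduction, the height-compressed tree $\mathcal{T}$ of Theorem~\ref{thm:heightcompression}, and the Algebraic Replay Engine of Theorem~\ref{thm:are} --- into a single recursive simulation whose space is dominated by one $O(b)$-sized scratchpad, with $b$ chosen optimally. First I would set $b \coloneqq \lceil \sqrt{t}\,\rceil$, so that the number of time-blocks is $T = \lceil t/b \rceil = O(\sqrt{t})$ and the tree has depth $O(\log T) = O(\log t)$; then I would apply the black-box preprocessing of Section~\ref{subsec:block-respecting} to obtain a block-respecting run with interface windows of size at most $c_{\mathrm{int}} \cdot b$, so that every interval summary has length $O(b)$.

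The simulation itself is a depth-first traversal of $\mathcal{T}$ whose persistent state consists of (i) an address into $\mathcal{T}$, encoded in $O(\log T)$ bits; (ii) a rolling boundary holding only $O(1)$ interval summaries at a time; and (iii) a single $O(b)$-sized replay window used both as the ARE scratchpad and as the workspace for the merge operator $\oplus$ of Definition~\ref{def:merge}. Whenever the traversal reaches a leaf $I_k$, I would invoke the ARE on $(\sigma(I_k),\tau)$ for each $\tau \in I_k$ to emit $C_\tau$; whenever an internal node needs its summary, the two child summaries are produced by recursive descent and fused, with intermediate summaries \emph{recomputed on demand} rather than cached along the DFS stack. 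Summing the three contributions gives total space $O(b) + O(\log T) = O(\sqrt{t})$.

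The main technical obstacle --- and the content of the companion work --- is showing that this rolling-boundary schedule really does suffice. A naive recursive evaluator that caches one $O(b)$-sized summary per level of $\mathcal{T}$ would store $\Theta(b \log T) = \Theta(\sqrt{t}\log t)$ cells, which only reproduces Williams's bound. Avoiding this blow-up requires exploiting the specific algebraic structure of the ARE and of $\oplus$: each merge needs its two operands only \emph{simultaneously at the moment of fusion}, after which one operand may be discarded and later regenerated from its tree address, reusing the same $O(b)$ scratchpad in sequence rather than in parallel. Verifying that this delicate scheduling, together with the $O(\log T)$ address arithmetic, never holds more than $O(b + \log T)$ active cells at once is exactly the property that Theorem~\ref{thm:heightcompression} together with Theorem~\ref{thm:are} is engineered to guarantee, so the proof of Theorem~\ref{thm:sqrtbound} reduces to citing them and carrying out the parameter choice $b = \lceil\sqrt{t}\,\rceil$.
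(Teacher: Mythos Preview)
Your proposal is correct and matches the paper's treatment: the paper does not prove Theorem~\ref{thm:sqrtbound} here but imports it from the companion work~\cite{HeightCompression}, and the informal derivation it gives in Section~\ref{subsec:are-rolling} is precisely your argument --- DFS of the height-compressed tree with a rolling boundary holding only $O(1)$ summaries, an $O(b)$ replay window, $O(\log T)$ bookkeeping, and the parameter choice $b \approx \sqrt{t}$. You have also correctly isolated the crux, namely that the rolling-boundary discipline (not caching one summary per DFS level) is what separates $O(\sqrt{t})$ from the naive $O(\sqrt{t}\log t)$, and that this is exactly the guarantee packaged into Theorems~\ref{thm:heightcompression} and~\ref{thm:are}.
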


Theorems~\ref{thm:heightcompression},~\ref{thm:are}, and~\ref{thm:sqrtbound} are the main technical inputs to the geometric picture developed below.

\medskip

\section{Spacetime Manifold and Holographic Boundaries}
\label{sec:spacetime-holography}

We now reinterpret the objects from Section~\ref{sec:prelims} as geometric entities. On the level of definitions and theorems, this section is essentially a change of language: all statements are direct reformulations of the constructions in~\cite{HeightCompression}, except where explicitly marked as new.

\subsection{The spacetime bulk and block decomposition}
\label{subsec:bulk-block}

We first formalize the notion of the spacetime ``bulk'' and the associated block structure.

\begin{definition}[Spacetime DAG and block decomposition]
\label{def:spacetime-dag}
Fix a deterministic multitape Turing machine $M$ with $k$ work tapes and Lipschitz-local head motion as in Section~\ref{subsec:machine-model}. Consider a length-$t$ run
\[
C_0, C_1, \dots, C_t
\]
of $M$ on some input.

The \emph{spacetime DAG} of this run is a finite directed acyclic graph
\[
\mathcal{M} = (V,E)
\]
defined as follows. For each time $\tau \in \{0,\dots,t-1\}$ and each tape head $i \in \{1,\dots,k\}$, we create a vertex representing the local update neighborhood around that head and the control state at time $\tau$; thus, for every $\tau$ we introduce $O(1)$ vertices, with the implied constant depending only on $M$. Directed edges in $E$ connect vertices at time $\tau$ to vertices at time $\tau+1$ whenever the transition function $\delta$ allows information to flow between the corresponding local neighborhoods in one step.

Fix now a block size $b \in \mathbb{N}$ and let $T = \lceil t/b \rceil$. The \emph{block decomposition} of $\mathcal{M}$ is the collection
\[
\{B_1,\dots,B_T\},
\]
where each $B_k$ is the induced subgraph of $\mathcal{M}$ on all vertices whose time coordinate lies in the interval
\[
I_k = \bigl[(k-1)b+1,\min\{kb,t\}\bigr].
\]
\end{definition}

In this language, the block-respecting property from Definition~\ref{def:block-respecting} becomes a statement about the size of the interface between each block and its complement.

\begin{definition}[Spacetime volume]
\label{def:spacetime-volume}
Let $\mathcal{M} = (V,E)$ be the spacetime DAG of a length-$t$ run of $M$ as in Definition~\ref{def:spacetime-dag}. We define the \emph{spacetime volume} of the run to be
\[
V_t \;:=\; |V|,
\]
the number of vertices in $\mathcal{M}$ (equivalently, the number of local update events). For a fixed machine $M$ with a fixed number of tapes, there are $\Theta(1)$ such update vertices per time step, so in the one-dimensional work-tape model considered in Theorem~\ref{thm:1d_area_law} we have
\[
V_t \;=\; \Theta(t).
\]
Consequently, bounds stated as functions of $t$ can equivalently be interpreted as bounds in terms of $V_t$ up to constant factors.
\end{definition}

\begin{definition}[Block-respecting geometry]
\label{def:block-respecting-geometry}
We say that $\mathcal{M}$ is \emph{block-respecting} (with respect to $b$ and $c_{\mathrm{int}}$) if for each block $B_k$ the interaction between $B_k$ and $\mathcal{M} \setminus B_k$ is confined to an interface window of size at most $c_{\mathrm{int}} \cdot b$ in the spacelike slice at the temporal boundaries of $I_k$, as in Definition~\ref{def:block-respecting}.
\end{definition}

Thus the geometry of $\mathcal{M}$ is characterized by the fact that each block has a uniformly bounded \emph{boundary area} when measured in tape cells.

The height-compressed computation tree $\mathcal{T}$ of Theorem~\ref{thm:heightcompression} induces a natural hierarchical structure on these blocks. The following remark makes explicit the ``radial'' behavior of block sizes along root-to-leaf paths.

\begin{remark}[Geometric decay of interval lengths]
\label{rem:geometric-shrink}
Let $\mathcal{T}$ be the balanced computation tree given by Theorem~\ref{thm:heightcompression}, and let each node $v \in \mathcal{T}$ be labeled by the time interval $[L_v,R_v]$ of its corresponding sub-run. The construction in~\cite{HeightCompression} ensures that $\mathcal{T}$ has depth $O(\log T)$ in the number $T$ of leaves. Equivalently, along any root-to-leaf path in $\mathcal{T}$ the interval lengths
\[
|I_v| \;:=\; R_v - L_v + 1
\]
decrease at least geometrically as a function of the depth of $v$. We use only this qualitative fact as a convenient ``radial'' parameterization in our geometric picture; no quantitative bound on the shrink factor beyond $O(\log T)$ depth is needed in the arguments of this paper.
\end{remark}

We will use Remark~\ref{rem:geometric-shrink} as the ``radial direction'' in our geometric picture: moving from the root of $\mathcal{T}$ toward a leaf corresponds to zooming in on shorter time intervals, with exponential convergence in the interval length.

\subsection{Holographic boundary states}
\label{subsec:holographic-boundary}

We now repackage the interval summaries of Section~\ref{subsec:summaries} in geometric form.

\begin{definition}[Holographic boundary state]
\label{def:holographic-boundary}
Let $\Omega \subseteq \mathcal{M}$ be a spacetime sub-region corresponding to a contiguous time interval $[L,R]$ that is a union of whole blocks. The \emph{holographic boundary state} of $\Omega$ is defined to be the interval summary
\[
\partial \Omega \;\;:=\;\; \sigma([L,R]) \;=\; \bigl(q_{\mathrm{in}}, q_{\mathrm{out}}, \vec{h}_{\mathrm{in}}, \vec{h}_{\mathrm{out}}, W_{\mathrm{interface}}\bigr),
\]
where the components are as in Definition~\ref{def:interval-summary}.

We define the \emph{area} of the boundary by
\[
|\partial \Omega| := \text{(number of tape cells encoded by $W_{\mathrm{interface}}$)},
\]
so that $|\partial \Omega| \le c_{\mathrm{int}} \cdot b$ for all such $\Omega$ in a block-respecting run.
\end{definition}

By construction, $\partial \Omega$ is sufficient information for the square-root space simulator to:
\begin{itemize}
    \item isolate the computation inside $\Omega$ from the rest of $\mathcal{M}$, and
    \item merge $\Omega$ with adjacent regions via the merge operator $\oplus$.
\end{itemize}
We do not claim that $\partial \Omega$ is \emph{minimal} in an information-theoretic sense; rather, it is a canonical boundary encoding arising from the height-compression construction.

\subsection{Active holographic screen}
\label{subsec:active-screen}

The simulation of Theorem~\ref{thm:sqrtbound} maintains only a small set of boundary states and a local replay window at any given moment. We interpret this set as an \emph{active holographic screen}.

\begin{definition}[Active holographic screen]
\label{def:active-screen}
Consider the square-root space simulation of Theorem~\ref{thm:sqrtbound}. At any simulated time $\tau$, let $\Phi(\tau)$ denote the union of:
\begin{itemize}
    \item the encodings of all holographic boundary states (interval summaries) currently stored along the root-to-leaf path in $\mathcal{T}$ that the simulator is traversing, and
    \item the $O(b)$-sized replay window on the simulated work tape used by the Algebraic Replay Engine.
\end{itemize}
The set $\Phi(\tau)$ is the \emph{active holographic screen} at time $\tau$, and its \emph{area} is defined to be
\[
|\Phi(\tau)| := \text{(number of work-tape cells occupied by these summaries and the replay window)}.
\]
\end{definition}

By the memory layout analysis in Appendix~\ref{app:memory-layout}, the remaining work-tape cells form a bookkeeping region of size $O(\log T)$ that we do not count as part of the screen. For all $\tau$ we have
\[
|\Phi(\tau)| = O\bigl(b + \log T\bigr),
\]
and when $b$ is chosen on the order of $\sqrt{t}$, the $\log T$ term is negligible compared to $\sqrt{t}$. In particular, we obtain the one-dimensional computational area law of Theorem~\ref{thm:1d_area_law} in Section~\ref{sec:bulk}, where we also show that, relative to the boundary data stored on this screen, both individual bulk configurations and the entire spacetime history admit $O(1)$ conditional description complexity.

\medskip

\section{Bulk Vacuum and Computational Area Law}
\label{sec:bulk}

We now make precise two consequences of the structure described in Sections~\ref{sec:prelims} and~\ref{sec:spacetime-holography}. First, we formalize the sense in which the \emph{bulk} of a deterministic computation carries no independent information beyond its boundary summaries, both at the level of individual configurations and at the level of entire spacetime slabs. Second, we restate the square-root space bound as a one-dimensional \emph{computational area law}, and we formulate a conjectural extension to higher-dimensional local architectures.

\subsection{Bulk configurations have $O(1)$ conditional complexity}
\label{subsec:bulk-vacuum}

Throughout this subsection we fix a deterministic multitape Turing machine $M$ and a specific $O(b)$-space implementation of the Algebraic Replay Engine (ARE) guaranteed by Theorem~\ref{thm:are}. We regard this implementation as part of the description of the computational model. For Kolmogorov complexity, we fix once and for all a universal prefix Turing machine $U$ and follow the conventions of~\cite{LiVitanyi2008}. All encodings (for interval summaries, time indices, etc.) are fixed as in Appendix~\ref{app:encodings}.

\begin{lemma}[Bulk configurations have $O(1)$ conditional description complexity]
\label{lem:bulk_vacuum}
There exists a constant $c \ge 1$, depending only on $M$, $U$, and the fixed ARE implementation, such that for every block-respecting interval $[L,R]$, every internal time $\tau \in [L,R]$, and every input, the configuration $C_\tau$ satisfies
\[
K\bigl(C_\tau \,\big|\, \sigma([L,R]), \tau\bigr) \;\le\; c,
\]
where $K(\cdot \mid \cdot)$ denotes prefix-free Kolmogorov complexity with respect to $U$ and the encodings described in Appendix~\ref{app:encodings}.
\end{lemma}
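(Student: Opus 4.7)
The plan is to reduce the statement to Theorem~\ref{thm:are} (the Algebraic Replay Engine) by exhibiting a single fixed program for the universal prefix machine $U$ that, given $(\sigma([L,R]),\tau)$ as conditional input, deterministically outputs $C_\tau$. Since Kolmogorov complexity is bounded above by the length of any such program, and this program depends only on $M$, on the fixed ARE implementation, and on the fixed encoding conventions of Appendix~\ref{app:encodings} --- but not on $[L,R]$, $\tau$, or the input to $M$ --- its length is a universal constant $c$.

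\paragraph{Construction of the witness program.} First, I would fix once and for all the $O(b)$-space Turing machine $A$ implementing the ARE from Theorem~\ref{thm:are}; by hypothesis its transition table is a finite object of constant description length. Next, I would define a prefix-free program $P$ for $U$ that does the following: (i) parse the conditional input tape using the pairing convention of Appendix~\ref{app:encodings} to recover the two strings $\sigma([L,R])$ and the binary encoding of $\tau$; (ii) simulate $A$ on input $(\sigma([L,R]),\tau)$, allocating whatever work space $A$ requires (this is allowed because the Kolmogorov complexity is a description-length quantity, not a space-bounded one); (iii) copy the output produced by $A$ onto $U$'s output tape in the fixed encoding used for configurations. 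By the correctness clause of Theorem~\ref{thm:are}, the output equals $C_\tau$ whenever $\tau\in[L,R]$ and $[L,R]$ is block-respecting, which are exactly the hypotheses of the lemma. Setting $c := |P|$ then yields $K(C_\tau\mid\sigma([L,R]),\tau)\le c$, as required.

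\paragraph{Main obstacle and bookkeeping.} The substantive work in this lemma was already done in Theorem~\ref{thm:are}; the only residual obstacle is the encoding hygiene needed to make the constant genuinely uniform. Concretely, I must verify that (a) the pairing of the two conditioning strings on $U$'s reference tape is decoded the same way by $P$ as the left-hand side $K(\cdot\mid\sigma([L,R]),\tau)$ interprets them, and (b) the raw output produced by the ARE agrees bit-for-bit with the canonical encoding of $C_\tau$ implicit in the left-hand side. Both are purely matters of fixing one interpretation in Appendix~\ref{app:encodings} and sticking to it; any mismatch between the ARE's native output format and the canonical configuration encoding is absorbed into a constant-size transducer compiled into $P$. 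Once these conventions are pinned down, the constant $c$ depends only on $M$, on $U$, and on the finite code of the ARE, uniformly in $[L,R]$, $\tau$, and the input, which is exactly the dependence claimed by the lemma.
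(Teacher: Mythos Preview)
Your proposal is correct and matches the paper's own proof essentially line for line: both construct a single fixed prefix-free program that parses the conditional input as $(\sigma([L,R]),\tau)$, simulates the fixed ARE machine $A$ from Theorem~\ref{thm:are}, and outputs the resulting configuration in the canonical encoding, then take $c$ to be the length of this program (plus the universal $O(1)$ overhead). Your discussion of encoding hygiene is, if anything, slightly more explicit than the paper's treatment.
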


\begin{proof}
See Appendix~\ref{app:holographic-proofs}, Proof of Lemma~\ref{lem:bulk_vacuum}.
\end{proof}

The lemma is a pointwise statement: each \emph{individual} bulk configuration has constant conditional description complexity given the appropriate boundary summary and time index. The next corollary upgrades this to a whole-interval statement.

\begin{corollary}[Holographic compression of interval histories]
\label{cor:global_bulk}
Let $[L,R]$ be any block-respecting interval and let
\[
H_{[L,R]} := (C_L,\dots,C_R)
\]
denote the corresponding configuration history, encoded as a single binary string as in Appendix~\ref{app:encodings}. Then there exists a constant $c' \ge 1$, depending only on $M$, $U$, and the fixed ARE implementation, such that
\[
K\bigl(H_{[L,R]} \,\big|\, \sigma([L,R])\bigr) \;\le\; c'.
\]
In particular, for the full run interval (denoted here by $[0,t]$), the entire history $H_t := (C_0,\dots,C_t)$ satisfies
\[
K\bigl(H_t \,\big|\, \sigma([0,t])\bigr) \;=\; O(1).
\]
\end{corollary}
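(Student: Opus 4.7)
The plan is to exploit the fact that the Algebraic Replay Engine (ARE) of Theorem~\ref{thm:are} is a fixed deterministic subroutine. Lemma~\ref{lem:bulk_vacuum} already shows that each individual $C_\tau$ has $O(1)$ description complexity given $(\sigma([L,R]), \tau)$, but naively iterating over $\tau$ would seem to cost $\Theta((R-L)\log(R-L))$ bits to specify the time indices. The key observation is that the indices need not be supplied separately: a single constant-size program, run with $\sigma([L,R])$ on its conditioning tape, can iterate through all $\tau$ internally and regenerate the entire history on its output tape.

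Concretely, I would fix a universal prefix machine $U$ and a definite implementation of the ARE as in Section~\ref{subsec:are-rolling}, and then exhibit a constant-size program $P$ that, given $\sigma([L,R])$ on the conditioning tape, first decodes the interval endpoints $L$ and $R$ from the canonical summary encoding of Appendix~\ref{app:encodings}, then for each $\tau = L, L+1, \ldots, R$ invokes the ARE on input $(\sigma([L,R]), \tau)$ to compute $C_\tau$, and finally writes the resulting configurations to the output tape using the self-delimiting concatenation scheme for $H_{[L,R]}$ also fixed in Appendix~\ref{app:encodings}. The description of $P$ depends only on $M$, $U$, and the chosen ARE implementation; none of its parameters grow with $R-L$ or with any feature of the run, so $|P| = O(1)$.

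By the defining property of the ARE (Theorem~\ref{thm:are}), each call produces the correct $C_\tau$, so $U$ run on program $P$ with $\sigma([L,R])$ as conditioning string emits precisely the encoded history $H_{[L,R]}$. This yields $K(H_{[L,R]} \mid \sigma([L,R])) \le |P| + O(1) = c'$ for an absolute constant $c'$, establishing the first claim; the specialization to $[L,R] = [0,t]$ gives the statement for $H_t$.

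The one subtlety that I expect to require genuine care is ensuring that $L$ and $R$ (or equivalently the length $R-L+1$) are recoverable from $\sigma([L,R])$ itself, rather than being supplied externally at a cost of $\Theta(\log t)$ bits that would break the $O(1)$ bound. This is purely an encoding matter and is handled by fixing the summary format in Appendix~\ref{app:encodings} so that these few metadata bits appear as a header of the summary; they are already present in the simulator's internal representation and do not inflate any of the asymptotic size bounds of Sections~\ref{subsec:summaries} and~\ref{subsec:active-screen}. Beyond this bookkeeping check, the argument is a direct packaging of the ARE as an algebraic oracle for the entire bulk history, with no further information-theoretic input needed.
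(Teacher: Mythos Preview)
Your proposal is correct and follows essentially the same approach as the paper: exhibit a single fixed program that decodes $(L,R)$ from the summary encoding, loops over $\tau=L,\dots,R$ invoking the ARE, and concatenates the outputs using the fixed sequence encoding. You also correctly anticipate the one genuine subtlety---that $(L,R)$ must be recoverable from $\sigma([L,R])$ itself---and resolve it exactly as the paper does, by appealing to the encoding convention of Appendix~\ref{app:encodings} that includes the endpoints in the summary.
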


\begin{proof}
See Appendix~\ref{app:holographic-proofs}, Proof of Corollary~\ref{cor:global_bulk}.
\end{proof}

\begin{remark}[Algebraic vacuum of the bulk]
\label{rem:bulk-vacuum}
Lemma~\ref{lem:bulk_vacuum} formalizes the following informal intuition. Once the machine $M$ and the holographic boundary state $\partial \Omega = \sigma([L,R])$ are fixed, every internal configuration $C_\tau$ with $\tau \in (L,R)$ is generated by a fixed, constant-complexity algebraic procedure (namely the ARE). In this sense, the bulk of the spacetime DAG $\mathcal{M}$ carries no additional algorithmic information beyond what is already present in the boundary summaries and the global description of $M$: all nontrivial information required to specify any single configuration is concentrated on its boundary data.

Corollary~\ref{cor:global_bulk} upgrades this from a pointwise to a global statement: for any block-respecting interval $[L,R]$, the joint configuration sequence $H_{[L,R]}$ has $O(1)$ conditional Kolmogorov complexity given its boundary summary $\sigma([L,R])$. In particular, for the full interval $[0,t]$, the complete bulk history $H_t = (C_0,\dots,C_t)$ is, up to an additive constant in description length, an algebraically determined evaluation trace of a boundary-defined circuit.
\end{remark}

\subsection{A one-dimensional computational area law}
\label{subsec:1d-area-law}

We now restate the square-root space bound in the geometric language of active holographic screens introduced in Definition~\ref{def:active-screen}. The following theorem is a direct reformulation of Theorem~\ref{thm:sqrtbound} together with the memory layout analysis in Appendix~\ref{app:memory-layout}.

\begin{theorem}[One-dimensional computational area law]
\label{thm:1d_area_law}
Let $M$ be a deterministic multitape Turing machine whose work tape is one-dimensional ($d=1$). For every time bound $t \ge 1$, there exists a choice of block size $b$ and a block-respecting height-compressed computation tree $\mathcal{T}$ such that the associated square-root space simulation of Theorem~\ref{thm:sqrtbound} satisfies
\[
\max_{\tau \in \{0,\dots,t\}} |\Phi(\tau)| \;=\; O\bigl(\sqrt{t}\bigr),
\]
where $\Phi(\tau)$ is the active holographic screen at simulated time $\tau$ as in Definition~\ref{def:active-screen}.
\end{theorem}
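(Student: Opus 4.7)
The plan is to derive the area-law bound directly from the square-root space simulation of Theorem~\ref{thm:sqrtbound} by unpacking Definition~\ref{def:active-screen} and fixing the block-size parameter explicitly. Concretely, I would set $b := \lceil \sqrt{t}\,\rceil$, so that the number of leaves of the height-compressed tree $\mathcal{T}$ is $T = \lceil t/b \rceil = O(\sqrt{t})$ and, by Theorem~\ref{thm:heightcompression}, $\mathcal{T}$ has depth $O(\log T) = O(\log t)$. I then run the rolling-boundary simulation described in Section~\ref{subsec:are-rolling} with this choice of $b$ and $\mathcal{T}$, so that the task reduces to bounding the work-tape footprint of the two components that Definition~\ref{def:active-screen} counts toward $\Phi(\tau)$.

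Next I would bound each component in turn. First, by the rolling-boundary discipline, at any moment the simulator keeps only $O(1)$ interval summaries from the current root-to-leaf path in $\mathcal{T}$ simultaneously resident on the work tape (for instance, the current interval together with the parent or sibling needed for the pending merge); summaries along the remainder of the path are regenerated on demand from their children rather than stored. By Definition~\ref{def:holographic-boundary} and the block-respecting bound $|W_{\mathrm{interface}}| \le c_{\mathrm{int}} \cdot b$, each such summary has description length $O(b) + O(1) = O(\sqrt{t})$. Second, the $O(b)$-sized replay window maintained by the Algebraic Replay Engine of Theorem~\ref{thm:are} occupies $O(\sqrt{t})$ cells. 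Summing these $O(1)$-many $O(\sqrt{t})$ contributions yields $|\Phi(\tau)| = O(\sqrt{t})$ at every simulated time $\tau$, and taking the maximum over $\tau \in \{0,\dots,t\}$ gives the claimed bound.

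Finally, I would verify that the remaining work-tape usage, which is excluded from $\Phi(\tau)$ by Definition~\ref{def:active-screen}, is genuinely negligible. By the memory-layout analysis in Appendix~\ref{app:memory-layout}, the only cells not in $\Phi(\tau)$ are the $O(\log T) = O(\log t)$ cells storing indices into $\mathcal{T}$, DFS state, and loop counters; since $\log t = o(\sqrt{t})$, this region could in fact be absorbed into the screen without affecting the asymptotic bound, but it is excluded to match the definition precisely. This accounting also confirms that the chosen $b$ is consistent with the global $O(b + \log T)$ space bound from Theorem~\ref{thm:sqrtbound}.

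The ``main obstacle'' here is really bookkeeping rather than a new idea: the entire statement is a geometric repackaging of Theorem~\ref{thm:sqrtbound}, so the crux is aligning the per-component decomposition of $\Phi(\tau)$ demanded by Definition~\ref{def:active-screen} with the global space accounting already established in the companion work. The one subtlety worth explicit attention in the write-up is justifying that \emph{only} $O(1)$ summaries—rather than the full $O(\log T)$ ancestor chain—are resident at once; this is where the rolling-boundary schedule, and not merely the existence of the height-compressed tree, is essential.
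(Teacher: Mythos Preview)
Your proposal is correct, but it takes a more granular route than the paper. The paper's proof is a one-line containment argument: by Appendix~\ref{app:memory-layout} (specifically Lemma~\ref{lem:screen-vs-total}), the screen cells are a subset of all nonblank work-tape cells, so $|\Phi(\tau)| = S_{\mathrm{screen}}(\tau) \le S_{\mathrm{total}}(\tau) \le S(t)$, and Theorem~\ref{thm:sqrtbound} already gives $S(t) = O(\sqrt{t})$. No per-component accounting is needed, and in particular the question of whether $O(1)$ or $O(\log T)$ summaries are simultaneously resident never arises---whatever is stored is already absorbed into the global $O(\sqrt{t})$ bound.

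Your bottom-up decomposition (summaries $+$ replay window, each $O(b)$, with $b \approx \sqrt{t}$) is more informative about \emph{where} the space goes, and it makes the role of the rolling-boundary discipline explicit. But this buys you an extra obligation: you must justify that only $O(1)$ summaries are resident, since $O(\log T)$ summaries of size $O(b)$ each would give $O(\sqrt{t}\log t)$ rather than $O(\sqrt{t})$. You correctly flag this as the one genuine subtlety, and the paper does state it (Section~\ref{subsec:are-rolling} and Definition~\ref{def:rolling-boundaries}), so your argument goes through. The paper's approach simply sidesteps the issue by appealing to the total space bound as a black box.
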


\begin{proof}
See Appendix~\ref{app:holographic-proofs}, Proof of Theorem~\ref{thm:1d_area_law}.
\end{proof}

Theorem~\ref{thm:1d_area_law} can be read as a computational analogue of an \emph{area law}: in one spatial dimension, the information capacity (measured in tape cells over the fixed finite alphabet) required to regenerate a deterministic computation of spacetime ``volume'' $t$ grows like $\Theta(\sqrt{t})$, rather than linearly in $t$. In particular, up to constant factors, the screen area scales like the square root of the number of spacetime vertices in the run.

\subsection{A conjectural $d$-dimensional isoperimetric inequality}
\label{subsec:d-dim-conjecture}

We next formulate a conjectural extension of Theorem~\ref{thm:1d_area_law} to higher-dimensional local architectures. For this discussion we consider deterministic machines with a $d$-dimensional work tape and local transition rules.

\begin{definition}[Geometrically local $d$-dimensional machine]
\label{def:d-dim-machine}
Fix $d \ge 1$. A deterministic machine $M$ has \emph{$d$-dimensional local memory} if its work tape cells are indexed by $\mathbb{Z}^d$ and there exists a constant radius $r \ge 1$ such that in a single transition, each cell can influence only cells whose $\ell_1$-distance is at most $r$. The corresponding spacetime DAG $\mathcal{M}$ then embeds in $\mathbb{Z}^{d+1}$ with edges only between vertices at bounded distance.
\end{definition}

Let $V_t$ denote the number of spacetime vertices of a length-$t$ run of such a machine (equivalently, the number of local update operations). In the one-dimensional case analyzed above we have $V_t = \Theta(t)$, and Theorem~\ref{thm:1d_area_law} can be informally rephrased as
\[
\max_\tau |\Phi(\tau)| \;=\; O\bigl(V_t^{1/2}\bigr).
\]

Motivated by the heuristic that in $(d+1)$-dimensional spacetime the boundary of a region of volume $V$ should scale like $V^{d/(d+1)}$, we make the following conjecture.

\begin{conjecture}[Computational isoperimetric inequality in $d$ dimensions]
\label{conj:ddim}
Let $M$ be a deterministic machine with $d$-dimensional local memory and geometrically local transition rules in the sense of Definition~\ref{def:d-dim-machine}. Let $V_t$ be the number of spacetime vertices in a run of duration $t$.

Then there exists a block decomposition of the run and a height-compressed recursion tree $\mathcal{T}$ over these blocks, together with a simulation strategy that is analogous to the one-dimensional square-root simulation, such that the maximum area of the active holographic screen satisfies
\[
\max_{\tau} |\Phi(\tau)| \;\le\; c_d \cdot V_t^{\frac{d}{d+1}},
\]
for some constant $c_d$ depending only on $d$ and the machine model.
\end{conjecture}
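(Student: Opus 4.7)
The plan is to port the three-part structure of the one-dimensional proof (interval summaries, height-compressed tree, and algebraic replay engine) from temporal intervals to cubical spacetime regions in $(d+1)$ dimensions, choosing a block side length $b$ that equalizes boundary cost with target screen area. Decompose $\mathcal{M}$ into axis-aligned spacetime cubes of side $b$, giving at most $T = O(V_t/b^{d+1})$ leaves. Lipschitz locality of the transition rule with radius $r = O(1)$ confines the interaction of a cube $\Omega$ with its exterior to a thickened $d$-dimensional interface of area $O(b^d)$; this is the grid isoperimetric inequality, and it is the source of the exponent $d/(d+1)$. Setting $b = \Theta(V_t^{1/(d+1)})$ then matches each cube's summary to the target size $O(V_t^{d/(d+1)})$, reducing to $\Theta(\sqrt{t})$ when $d=1$ as a sanity check against Theorem~\ref{thm:1d_area_law}.

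The construction would then proceed in several steps. First, I would establish a $d$-dimensional analog of the block-respecting reduction so that each cube $\Omega$ interacts with its complement only through an interface record $\sigma(\Omega)$ consisting of the machine state and cell contents on the past, future, and lateral faces of $\Omega$. Next, I would define a merge operator $\oplus$ that glues two face-adjacent cubes when their shared-face summaries agree, and verify that repeated merges correctly label any axis-aligned union of blocks. A balanced height-compressed tree $\mathcal{T}$ is then obtained by recursive bisection of spacetime (an octree-style decomposition of depth $O(\log T)$), with each internal node labeled by the merged summary of its region. The crucial algebraic step is a $d$-dimensional Algebraic Replay Engine that, given $\sigma(\Omega)$ for a leaf cube, regenerates any local update event inside $\Omega$ using $O(b^d)$ workspace by encoding the $b^{d+1}$ local transitions as a constant-degree circuit over a fixed finite field. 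A depth-first rolling boundary traversal of $\mathcal{T}$ that keeps only $O(1)$ active summaries then yields $\max_\tau |\Phi(\tau)| = O(b^d + \log T) = O(V_t^{d/(d+1)})$, with the constant $c_d$ absorbing the grid isoperimetric constant, the degree of the replay circuit, and the branching factor of the octree.

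The main obstacle is the $d$-dimensional Algebraic Replay Engine. In one dimension the linear order of the tape lets the block's transitions be expressed as a low-degree polynomial composition whose replay circuit fits in $O(b)$ space; in higher dimensions the $b^{d+1}$ local events form a $(d+1)$-dimensional lattice of dependencies, and producing a constant-degree algebraic circuit that replays any single event from the $O(b^d)$ boundary data---while using only $O(b^d)$ workspace rather than the naive $O(b^{d+1})$---likely requires a genuinely new algebraic construction, perhaps exploiting multidimensional polynomial identities or spacetime tensor-network contractions. A secondary obstacle is the $d$-dimensional block-respecting preprocessing: in $1$D a single spatial interface suffices because heads confined by Lipschitz motion cannot escape an interval of length $b$ in time $b$, whereas in $d$ dimensions a cube has $2(d+1)$ faces through which heads or dependencies may cross, and enforcing that every such interface has area $O(b^d)$ (rather than the worst case $O(b^{d+1})$ if the block fills densely with head activity) may require an averaging or rerouting argument, or a partition adapted to the causal structure of the specific run rather than a rigid axis-aligned grid. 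Without these two pieces the conjecture remains open, but the analogy with $(d+1)$-dimensional area laws in physics strongly suggests that $d/(d+1)$ is the correct exponent.
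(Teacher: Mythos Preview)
The statement you are attempting to prove is labeled a \emph{Conjecture} in the paper, and the paper does not prove it. Immediately after stating it, the paper writes: ``We emphasize that Conjecture~\ref{conj:ddim} is not implied by the existing height-compression machinery\ldots Establishing it would require (i) extending the block-respecting and height-compression constructions to $d$-dimensional local architectures with $d \ge 2$, and (ii) proving an appropriate discrete isoperimetric inequality for the resulting spacetime decompositions.'' There is therefore no proof in the paper to compare against.

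Your proposal is in fact consistent with this status: you sketch the natural generalization (cubical blocks, face summaries, octree-style height compression, a $d$-dimensional ARE, $b = \Theta(V_t^{1/(d+1)})$) and then explicitly identify the two steps you cannot carry out---the $O(b^d)$-space replay engine for a $(d{+}1)$-dimensional dependency lattice, and the $d$-dimensional block-respecting reduction---and you conclude that ``without these two pieces the conjecture remains open.'' These are precisely the gaps (i) and (ii) the paper names. So your write-up is not a proof but an informed heuristic outline, and it matches the paper's own informal discussion of what a proof would have to look like and where the difficulty lies. If anything, your analysis is slightly more detailed than the paper's, since you separate the ARE obstacle from the block-respecting obstacle and note that the latter may require a run-adapted (rather than axis-aligned) partition.
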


We emphasize that Conjecture~\ref{conj:ddim} is not implied by the existing height-compression machinery of~\cite{HeightCompression}. Establishing it would require (i) extending the block-respecting and height-compression constructions to $d$-dimensional local architectures with $d \ge 2$, and (ii) proving an appropriate discrete isoperimetric inequality for the resulting spacetime decompositions. We therefore present Conjecture~\ref{conj:ddim} only as a natural geometric extrapolation of the one-dimensional area law in Theorem~\ref{thm:1d_area_law}.

\medskip

\section{Time as a Tree Topology}
\label{sec:time-tree}

In the height-compressed representation, the linear time axis of the original run is no longer fundamental. Instead, the run is encoded in a static binary tree of spacetime volumes, and ``time'' appears as a particular traversal of this tree. In this section we formalize this viewpoint using the static causal tree induced by height compression and the notion of an active screen.

\subsection{Static causal tree of spacetime volumes}
\label{subsec:static-causal-tree}

We first reinterpret the height-compressed computation tree $\mathcal{T}$ from Theorem~\ref{thm:heightcompression} as a static object that encodes the entire run.

\begin{definition}[Static causal tree]
\label{def:static-causal-tree}
Let $\mathcal{T}$ be the balanced computation tree produced by the height-compression transformation of Theorem~\ref{thm:heightcompression}. Each node $v \in \mathcal{T}$ is labeled by:
\begin{itemize}
    \item a time interval $[L_v, R_v] \subseteq \{0,\dots,t\}$, and
    \item its holographic boundary state $\partial \Omega_v := \sigma([L_v,R_v])$.
\end{itemize}
We call $\mathcal{T}$, together with these labels, the \emph{static causal tree} of the run.
\end{definition}

By construction:

\begin{itemize}
    \item each leaf of $\mathcal{T}$ corresponds to a time-block $I_k$, and
    \item each internal node $v$ represents the union of its children's intervals, with boundary $\partial \Omega_v$ computed via the merge operator $\oplus$ of Definition~\ref{def:merge}.
\end{itemize}

Thus $\mathcal{T}$ encodes the entire semantic content of the computation history as a static, hierarchical structure, independent of any particular traversal order.

\subsection{Rolling boundaries and a directed notion of time}
\label{subsec:rolling-arrow}

The square-root space simulation algorithm evaluates $\mathcal{T}$ via a depth-first traversal, maintaining only a small number of node summaries at any moment. We interpret this as a disciplined way of turning the static causal tree into a directed notion of ``time'' for the simulator.

\begin{definition}[Depth-first traversal and rolling boundaries]
\label{def:rolling-boundaries}
Let $\pi$ be a fixed depth-first traversal order on the nodes of $\mathcal{T}$ (for concreteness, a pre-order DFS). At any step of the simulation, as it follows $\pi$, the active memory consists of:
\begin{itemize}
    \item the holographic boundary $\partial \Omega_{v_{\mathrm{cur}}}$ of the current node $v_{\mathrm{cur}}$,
    \item the boundary of its parent $v_{\mathrm{par}}$ (or an $O(1)$-sized set of nearby ancestors along the current root-to-leaf path), and
    \item an $O(b)$-sized replay window $W_{\mathrm{replay}}$ used by the Algebraic Replay Engine.
\end{itemize}
We summarize this as
\[
\mathrm{Memory}_{\mathrm{active}}(\tau)
\;\subseteq\;
\{\partial \Omega_{v_{\mathrm{cur}}}, \partial \Omega_{v_{\mathrm{par}}}\} \cup W_{\mathrm{replay}} \cup \mathrm{Bookkeeping},
\]
where the bookkeeping region contains $O(\log T)$ bits of indexing and control information.
\end{definition}

In this formulation, the simulator never needs to materialize the entire bulk, nor arbitrary subsets of $\mathcal{T}$: by design, it is constrained to follow the traversal $\pi$ and to respect the local causal structure encoded in the node labels and the merge operator. This restriction is what enforces the area law on $|\Phi(\tau)|$ and prevents arbitrary ``random access'' to deep interior regions of $\mathcal{M}$ without paying the full space cost.

\subsection{Projective equivalence between history and traversal}
\label{subsec:projective-duality}

We finally formalize the relationship between the linear execution trace and the traversal of $\mathcal{T}$. The key point is that the depth-first traversal, together with the replay engine, reconstructs each configuration exactly once.

\begin{proposition}[Projective duality between history and tree traversal]
\label{prop:projective}
Let $\mathcal{L} = (C_0,\dots,C_t)$ be the linear execution trace of the run, and let $\mathcal{T}$ be its static causal tree. Then there exists a constructive mapping that associates to each time index $\tau \in \{0,\dots,t\}$:
\begin{itemize}
    \item a leaf node $\ell(\tau) \in \mathcal{T}$, and
    \item a local offset $\delta(\tau)$ within the associated time-block $I_{\ell(\tau)}$,
\end{itemize}
such that the simulation, following the fixed depth-first traversal $\pi$ and applying the Algebraic Replay Engine locally, reconstructs $C_\tau$ at the unique visit to the pair $(\ell(\tau), \delta(\tau))$.

Conversely, every such visit to a leaf and local offset along $\pi$ produces a unique configuration in $\mathcal{L}$. In particular, the map from $\mathcal{L}$ to the set of visited leaf-offset pairs is bijective.
\end{proposition}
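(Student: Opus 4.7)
The plan is to explicitly construct the map $\tau \mapsto (\ell(\tau), \delta(\tau))$ from the block partition and then verify the three claims (forward reconstruction, converse, bijectivity) by combining the block decomposition with two previously established facts: a depth-first traversal visits each leaf of a tree exactly once, and the Algebraic Replay Engine of Theorem~\ref{thm:are} reconstructs any configuration inside a block-respecting interval from its summary and an internal time index.

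First I would define the mapping. For $\tau \in \{1,\dots,t\}$, let $k(\tau) = \lceil \tau/b \rceil$ be the unique block index with $\tau \in I_{k(\tau)}$, set $\ell(\tau)$ to be the leaf of $\mathcal{T}$ labeled by $I_{k(\tau)}$ (each leaf corresponds to a unique time-block by Theorem~\ref{thm:heightcompression}(1)), and set $\delta(\tau) = \tau - (k(\tau)-1)b$. For the initial configuration $\tau = 0$, I adopt the convention $\ell(0) = \ell(1)$ and $\delta(0) = 0$, corresponding to the entry components $(q_{\mathrm{in}}, \vec{h}_{\mathrm{in}})$ already contained in $\sigma(I_1)$. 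Well-definedness follows from the fact that $\{I_k\}$ partitions $\{1,\dots,t\}$.

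Next I would verify forward reconstruction. When the traversal $\pi$ reaches the leaf $v = \ell(\tau)$, the simulator has loaded the holographic boundary $\partial\Omega_v = \sigma(I_{k(\tau)})$ into the active memory together with the $O(b)$-sized replay window, by Definition~\ref{def:rolling-boundaries}. Since the run is block-respecting, Theorem~\ref{thm:are} applied to the inputs $(\sigma(I_{k(\tau)}), \tau)$ outputs $C_\tau$ exactly, using only the replay window as work space. Because $\pi$ is a depth-first traversal of a tree, it visits each leaf of $\mathcal{T}$ exactly once, so the pair $(\ell(\tau), \delta(\tau))$ is realized at a unique step. For the converse, by Theorem~\ref{thm:heightcompression}(1) every leaf corresponds to a unique block $I_k$, and for each offset $\delta \in \{1,\dots,|I_k|\}$ the formula $\tau = (k-1)b + \delta$ recovers a unique time index in $I_k$ whose configuration is reconstructed as above.

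Bijectivity is then immediate: injectivity of $\tau \mapsto (\ell(\tau),\delta(\tau))$ follows from the partition property of $\{I_k\}$, and surjectivity onto the set of visited leaf-offset pairs follows by construction. The hard part is essentially notational rather than mathematical: all substantive content is already carried by the block-respecting discipline (Definition~\ref{def:block-respecting}), the leaf-to-block correspondence built into $\mathcal{T}$, and the Algebraic Replay Engine. The only mild subtlety is handling boundary time indices, namely $\tau = 0$ and the possibly shorter final block $I_T$; both are resolved by the standing convention that interval summaries include endpoint state and head-position data, so these endpoints are either absorbed into adjacent summaries or treated as a size-at-most-$b$ leaf without altering the argument.
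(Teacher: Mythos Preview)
Your proposal is correct and follows essentially the same approach as the paper's proof: define the map via the block partition $\tau \mapsto (\ell_{k(\tau)}, \tau - (k(\tau)-1)b)$, invoke that a DFS visits each leaf exactly once, apply Theorem~\ref{thm:are} at each leaf to reconstruct $C_\tau$, and read off bijectivity from the partition property. The only differences are cosmetic (your offsets start at $1$ rather than $0$, and you explicitly handle $\tau=0$ via a convention, whereas the paper silently restricts to $\{1,\dots,t\}$).
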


\begin{proof}
See Appendix~\ref{app:holographic-proofs}, Proof of Proposition~\ref{prop:projective}.
\end{proof}

Proposition~\ref{prop:projective} captures the precise sense in which the linear execution history and the depth-first traversal of the static causal tree are \emph{projectively equivalent}: the full history is encoded in $\mathcal{T}$, and the simulator's notion of ``time'' is implemented by a specific traversal order and a rolling boundary scheme, rather than being an independent structure.

\medskip

\section{Discussion: Physical Analogies and Speculative Extensions}
\label{sec:analogies}

The results above are purely combinatorial and information-theoretic: they concern deterministic simulations of Turing machines, succinct computation trees, and Kolmogorov complexity. Nevertheless, they bear a striking resemblance to structures encountered in holography and quantum gravity. In this section we briefly discuss some noteworthy parallels and speculative extensions. We emphasize that nothing in this section should be read as a mathematical theorem or as a definitive claim about physical systems; rather, these are intriguing, informal observations intended to motivate further conceptual work.

\subsection{Boundary encoding and area law}
\label{subsec:analogies-boundary}

The one-dimensional area law of Theorem~\ref{thm:1d_area_law} states that, for deterministic computation on a one-dimensional work tape, the space required to regenerate a run of ``volume'' $t$ is $O(\sqrt{t})$. Interpreting $t$ as the number of spacetime vertices in a $(1+1)$-dimensional spacetime DAG, this says that the maximal size of the active holographic screen scales like the square root of the spacetime volume.

Formally, this is a statement about the asymptotic behavior of
\[
\max_{\tau} |\Phi(\tau)|
\quad\text{as a function of}\quad t,
\]
where $\Phi(\tau)$ is defined in Definition~\ref{def:active-screen}. Informally, one may compare this to the way in which the entropy of a black hole scales with the area of its event horizon rather than with the volume of its interior~\cite{Bekenstein1973}. Motivated in part by such observations, 't~Hooft and Susskind formulated the holographic principle, according to which the information content of a gravitational region is encoded on a lower-dimensional boundary~\cite{tHooft1993,Susskind1995}.

In the AdS/CFT context, Ryu and Takayanagi derived an area-law formula for entanglement entropy in terms of minimal surfaces in the bulk~\cite{RyuTakayanagi2006}, and Eisert, Cramer, and Plenio survey a broad class of entanglement area laws in many-body systems~\cite{EisertCramerPlenio2010}. Our computational area law is entirely classical and deterministic, and it concerns work-tape usage rather than entropy; no quantum mechanics enters the formal statements. However, it is suggestive of a similar kind of \emph{boundary dominance}: in the simulations of Theorem~\ref{thm:sqrtbound}, all information that must be stored at any time is concentrated in a set of boundary summaries whose total size is asymptotically smaller than the spacetime volume of the computation.

In the geometric language of Section~\ref{sec:spacetime-holography}, the holographic boundary states $\partial \Omega$ serve as the computational boundary degrees of freedom, and the bulk configurations are algebraically determined from them via the Algebraic Replay Engine (Theorem~\ref{thm:are}), together with the static causal tree $\mathcal{T}$ (Definition~\ref{def:static-causal-tree}). This provides a purely combinatorial instance of a ``boundary determines bulk'' phenomenon.

\subsection{Bulk redundancy and algebraic emergence}
\label{subsec:analogies-emergence}

Lemma~\ref{lem:bulk_vacuum} shows that bulk configurations have $O(1)$ conditional description complexity relative to their interval summaries and time indices. More precisely, for each internal configuration $C_\tau$ within a block-respecting interval $[L,R]$ we have
\[
K\bigl(C_\tau \,\big|\, \sigma([L,R]), \tau\bigr) \le c,
\]
for a constant $c$ that does not depend on $[L,R]$, $\tau$, or the particular input. This is a strong form of algorithmic redundancy in the bulk: once the boundary summary and the time index are known, the interior configuration can be recovered by a fixed, constant-length program.

From a conceptual standpoint, this parallels (at a purely heuristic level) the idea that bulk geometry may be \emph{emergent} from boundary data in holographic dualities. In our setting, the emergence is explicitly algebraic and combinatorial: the Algebraic Replay Engine provides a uniform map
\[
(\sigma([L,R]),\tau) \;\longmapsto\; C_\tau,
\]
and the static causal tree $\mathcal{T}$ prescribes how such maps compose hierarchically across scales. The spacetime DAG $\mathcal{M}$ can therefore be viewed as a deterministic expansion of boundary summaries through a fixed, recursively applied local rule.

One might summarize this as follows: the ``degrees of freedom'' of a deterministic computation are carried, in an algorithmic sense, by its boundary summaries rather than by arbitrary bulk configurations. All interior configurations are determined by a combination of global structure (the program $M$ and the ARE) and local boundary data.

\subsection{A heuristic picture for nondeterminism}
\label{subsec:analogies-nondet}

The constructions in this paper fundamentally exploit determinism: at each boundary, there is a unique consistent continuation of the run. The merge operator $\oplus$ of Definition~\ref{def:merge} combines two compatible interval summaries into the summary of their union, and the ARE of Theorem~\ref{thm:are} maps a single interval summary and time index to a single configuration.

In a nondeterministic computation, the situation is very different. A ``summary'' of an interval would, in general, have to encode information about the set of all possible continuations consistent with the interface. This suggests the following informal dichotomy.

\begin{itemize}
    \item For deterministic machines, the merge operator $\oplus$ is \emph{summary-preserving}: the boundary of a merged region can be represented with essentially the same complexity as the boundaries of its parts (up to constant factors), because there is a unique consistent way to glue the runs. This is what enables the area law of Theorem~\ref{thm:1d_area_law}.
    \item For nondeterministic machines, a boundary summary would need, in principle, to encode a family of possible exit states and partial histories. In the absence of additional structure, this family can have size exponential in the volume of the region. In such cases, any summary that supports exact reconstruction of all consistent runs may need to have size that scales with the volume, making an area law unlikely in general.
\end{itemize}

Viewed through this lens, deterministic computation behaves like a \emph{boundary-compressible regime}, in which the information needed to reconstruct the bulk remains concentrated on lower-dimensional interfaces, while naively defined nondeterministic computation behaves more like a \emph{volume-dominated regime}, in which boundaries may be algorithmically incompressible.

We do not attempt to formalize this picture, and we do not claim that nondeterministic computation cannot admit any analogue of height compression under additional assumptions. Making this heuristic precise---for example, by proving conditional impossibility results for nondeterministic height compression or by isolating nontrivial classes of nondeterministic computations that still admit boundary-compressible summaries---is an intriguing open direction. Any such development would likely have connections to the study of time–space tradeoffs and to the structure of classical complexity classes such as $P$, $NP$, and beyond, but we leave this entirely to future work.

\medskip

\section{Conclusion}
\label{sec:conclusion}

Building on the technical machinery of the Height Compression Theorem and the square-root space simulation of~\cite{HeightCompression}, we have developed a geometric and information-theoretic perspective on deterministic computation.

Formally, our contributions can be summarized as follows.

\begin{itemize}
    \item We recast the square-root space simulation in terms of a spacetime DAG $\mathcal{M}$, a block decomposition, and interval summaries $\sigma([L,R])$ that we interpret as \emph{holographic boundary states}. This yields a precise notion of boundary ``area'' (the size of the interface window) and an \emph{active holographic screen} $\Phi(\tau)$ (Definition~\ref{def:active-screen}).
    \item We prove that bulk configurations have $O(1)$ conditional Kolmogorov complexity relative to their interval summaries and time indices (Lemma~\ref{lem:bulk_vacuum}). This shows that, in an algorithmic sense, the interior of a deterministic computation carries no additional information beyond its boundary summaries and the global description of the machine and replay engine.
    \item We reformulate the square-root space bound as a one-dimensional \emph{computational area law} (Theorem~\ref{thm:1d_area_law}), stating that the maximum size of the active holographic screen over a run of volume $t$ is $O(\sqrt{t})$.
    \item We introduce the \emph{static causal tree} $\mathcal{T}$ (Definition~\ref{def:static-causal-tree}), which encodes the entire run as a hierarchy of spacetime volumes labeled by boundary summaries, and we show that the linear execution history is projectively equivalent to a depth-first traversal of this tree combined with the Algebraic Replay Engine (Proposition~\ref{prop:projective}).
\end{itemize}

Conceptually, these results support a picture in which deterministic time evolution in a $(1+1)$-dimensional local model can be regarded as a form of \emph{computational holography}: the combinatorial ``bulk'' of the computation is an information-theoretic vacuum, and the essential information resides on lower-dimensional boundaries whose total size obeys an area law.

Several directions for further work suggest themselves. On the technical side, it would be valuable to formalize and prove (or refute) the $d$-dimensional isoperimetric Conjecture~\ref{conj:ddim}, which posits an analogue of the one-dimensional area law in higher-dimensional geometrically local models, with active screen area scaling on the order of $V_t^{d/(d+1)}$. Any progress here would require extending the height-compression machinery to higher-dimensional local architectures and establishing appropriate discrete isoperimetric inequalities for the resulting spacetime decompositions. On the structural side, one could investigate to what extent the boundary-summary framework can be adapted to nondeterministic or randomized computation, and whether there are clean separations between regimes that admit boundary-compressible summaries and those that are inherently volume-dominated.

Our results show that even in the most classical and discrete of settings---deterministic Turing machines with local transition rules---one can meaningfully separate bulk from boundary and prove a nontrivial area law for the information that must be stored to reconstruct the computation.

\medskip
\noindent\textbf{Acknowledgements.}
We gratefully acknowledge the many contributors who have catalyzed breakthroughs in time-space tradeoffs and efficient simulation during the past two years; this manuscript builds directly on the insights and results of others would not exist without them. We further disclose that the exploration, analysis, drafting, and revisions of this manuscript were conducted with the assistance of large language model technology; the authors bear sole responsibility for any errors in technical claims, constructions, and proofs. The authors declare that they have no conflicts of interest to disclose and received no external funding for this work.

\newpage

\appendix

\section{Formal Model and Encoding Conventions}
\label{app:encodings}

In this appendix we fix the formal conventions for encodings and Kolmogorov complexity used throughout the paper. This makes the information-theoretic statements in Section~\ref{sec:bulk} fully precise.

\subsection{Universal machine and Kolmogorov complexity}

We fix once and for all a universal prefix Turing machine $U$ over the binary alphabet $\{0,1\}$. For a finite binary string $x$ and conditional data $y$, the (prefix-free) Kolmogorov complexity of $x$ given $y$ is
\[
K(x \mid y)
\;:=\;
\min\bigl\{\, |p| : U(p,y) = x \bigr\},
\]
where $p$ ranges over binary programs and $|p|$ denotes the length of $p$ in bits. All Kolmogorov complexity statements in the main text are with respect to this fixed $U$.

As usual, $K(\cdot \mid \cdot)$ is defined only up to an additive $O(1)$ term that depends on the choice of $U$. Since we work with inequalities of the form
\[
K(x \mid y)\;\le\; c
\]
for a constant $c$, this additive ambiguity is immaterial.

\subsection{Encodings of configurations, summaries, and indices}

We assume that all objects manipulated by our Turing machines are encoded as binary strings via fixed, computable, injective encodings. Concretely:

\begin{itemize}
    \item A configuration $C_\tau$ of the multitape machine $M$ at time $\tau$ is encoded as a binary string
    \[
    \mathrm{enc}(C_\tau) \in \{0,1\}^\star,
    \]
    obtained by concatenating the contents of all work tapes, the head positions, and the control state, using some standard self-delimiting encoding.
    \item An interval summary $\sigma([L,R])$ is encoded as a binary string
    \[
    \mathrm{enc}(\sigma([L,R])) \in \{0,1\}^\star,
    \]
    by concatenating encodings of $q_{\mathrm{in}}$, $q_{\mathrm{out}}$, the head positions, the interface window $W_{\mathrm{interface}}$, and, for convenience in the Kolmogorov complexity arguments, a self-delimiting encoding of the interval endpoints $(L,R)$ (or equivalently of $L$ together with the length $R-L+1$).
    This augmentation does not affect any of the combinatorial properties of interval summaries used in the main text.
    \item A time index $\tau \in \{0,\dots,t\}$ is encoded by a standard binary representation
    \[
    \mathrm{enc}(\tau) \in \{0,1\}^{\lceil \log_2(t+1) \rceil}
    \]
    with self-delimiting overhead if needed.
\end{itemize}

We write $K(C_\tau \mid \sigma([L,R]), \tau)$ as shorthand for
\[
K\bigl(\mathrm{enc}(C_\tau) \,\big|\, \mathrm{enc}(\sigma([L,R])), \mathrm{enc}(\tau)\bigr),
\]
and similarly for other conditional complexities. All asymptotic $O(1)$ bounds on Kolmogorov complexity in the main text are to be understood with respect to these fixed encodings.

For an interval $[L,R]$, the history
\[
H_{[L,R]} := (C_L,\dots,C_R)
\]
is encoded as a single binary string $\mathrm{enc}(H_{[L,R]})$ using any fixed self-delimiting encoding of finite sequences (for example, by prefixing each $\mathrm{enc}(C_\tau)$ with its length in unary or via a standard pairing function). The precise choice of sequence encoding is immaterial as long as it is computable and injective.

\subsection{Encoding the computation tree}

The height-compressed computation tree $\mathcal{T}$ from Theorem~\ref{thm:heightcompression} is a finite rooted binary tree whose nodes are labeled by time intervals $[L_v,R_v]$ and their summaries $\sigma([L_v,R_v])$. We assume:

\begin{itemize}
    \item The underlying unlabeled tree structure of $\mathcal{T}$ is encoded as a binary string via any standard encoding of ordered rooted binary trees.
    \item The labels $(L_v,R_v)$ and $\sigma([L_v,R_v])$ are encoded using the conventions above, and concatenated in some fixed canonical order (e.g., pre-order).
\end{itemize}

These encodings are not used directly in any of the proofs in the appendices, but they guarantee that all objects discussed in the main text can be viewed as finite binary strings, so that Kolmogorov complexity is well-defined.

\medskip

\section{Simulator Memory Layout and Screen Area}
\label{app:memory-layout}

In this section we formalize the relationship between the work-tape contents of the square-root space simulator from Theorem~\ref{thm:sqrtbound} and the notion of \emph{active holographic screen} $\Phi(\tau)$ used in Section~\ref{sec:bulk}. The goal is to make precise how the geometric quantity $|\Phi(\tau)|$ relates to the underlying space bound.

\subsection{Partition of the work tape}

Fix a deterministic multitape Turing machine $M$ and a time bound $t$. Let $S(t)$ denote the space bound of the simulator constructed in~\cite{HeightCompression}, so that by Theorem~\ref{thm:sqrtbound} we have
\[
S(t) = O(\sqrt{t})
\]
work tape cells over the fixed finite alphabet.

At each simulated time $\tau \in \{0,\dots,t\}$, we partition the nonblank work-tape cells of the simulator into three disjoint categories:

\begin{enumerate}
    \item \textbf{Boundary summaries (screen nodes).}  
    Cells that store encodings of interval summaries $\sigma([L_v,R_v])$ for nodes $v$ on the current root-to-leaf path in the height-compressed tree $\mathcal{T}$.
    \item \textbf{Replay window.}  
    Cells that store the $O(b)$-sized window of the simulated work tape used by the Algebraic Replay Engine to regenerate the configurations $C_\tau$ for the current time-block.
    \item \textbf{Bookkeeping.}  
    All remaining work-tape cells used to maintain indices into $\mathcal{T}$, recursion depth counters, state flags for the DFS traversal, and any other control information.
\end{enumerate}

We write:
\begin{align*}
S_{\mathrm{screen}}(\tau) &:= \text{number of cells in categories (1) and (2)},\\
S_{\mathrm{book}}(\tau) &:= \text{number of cells in category (3)},\\
S_{\mathrm{total}}(\tau) &:= S_{\mathrm{screen}}(\tau) + S_{\mathrm{book}}(\tau),
\end{align*}
so that $S_{\mathrm{total}}(\tau)$ is the total number of nonblank work-tape cells at simulated time~$\tau$.

\subsection{Definition of the active holographic screen}

Recall the definition from Section~\ref{subsec:active-screen}: at simulated time $\tau$, the active holographic screen $\Phi(\tau)$ consists of the union of all holographic boundary states currently stored (interval summaries along the DFS path) together with the replay window used by the Algebraic Replay Engine. Formally, we define:

\begin{definition}[Active holographic screen, formal]
For each simulated time $\tau$, let $\Phi(\tau)$ be the set of work-tape cells belonging to categories (1) and (2) above. The \emph{area} of the active holographic screen at $\tau$ is
\[
|\Phi(\tau)| := S_{\mathrm{screen}}(\tau).
\]
\end{definition}

Thus the combinatorial quantity $|\Phi(\tau)|$ is exactly the portion of the simulator’s space usage that corresponds to boundary data and local replay, excluding purely administrative bookkeeping.

\subsection{Bounds on bookkeeping space}

The analysis in~\cite{HeightCompression} shows that the simulator maintains only $O(\log T)$ bits of bookkeeping information, where $T = \lceil t/b \rceil$ is the number of time-blocks. For completeness we record this as a lemma.

\begin{lemma}[Bookkeeping overhead]
\label{lem:bookkeeping}
Let $M$, $t$, $b$, and $T = \lceil t/b \rceil$ be as in the main text. There exists a constant $c_{\mathrm{book}}$ (depending only on the simulator construction in~\cite{HeightCompression}) such that for all simulated times $\tau$,
\[
S_{\mathrm{book}}(\tau) \;\le\; c_{\mathrm{book}} \log T.
\]
\end{lemma}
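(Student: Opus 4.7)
The plan is to identify every piece of control information the simulator of Theorem~\ref{thm:sqrtbound} maintains outside of categories (1) and (2) of Section~\ref{app:memory-layout}, and to bound each one by $O(\log T)$ using the fact that the height-compressed tree $\mathcal{T}$ has $O(T)$ nodes and depth $O(\log T)$ (Theorem~\ref{thm:heightcompression}). Since $c_{\mathrm{book}}$ is permitted to depend on the fixed simulator construction, the argument reduces to enumeration plus elementary bit-counting; there is no asymptotic slack to chase.

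First, I would recall from Theorem~\ref{thm:heightcompression} that $\mathcal{T}$ is a balanced binary tree whose leaves are the $T$ time-blocks, so it has at most $2T-1$ nodes and depth $O(\log T)$. Any pointer into $\mathcal{T}$ (the current DFS node, sibling pointers, etc.) can therefore be encoded in $\lceil \log_2(2T-1)\rceil = O(\log T)$ bits using any fixed canonical indexing of nodes. The DFS traversal of Definition~\ref{def:rolling-boundaries} maintains a stack of ``which-child-next'' markers along the current root-to-leaf path; since the path has length $O(\log T)$ and each marker is a constant-size tag (left/right/returned, plus a machine-state flag from the finite control of the traversal routine), the entire stack occupies $O(\log T)$ cells.

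Second, I would account for the remaining counters and flags. The Algebraic Replay Engine of Theorem~\ref{thm:are} needs a local offset $\delta \in \{0,\dots,b-1\}$ inside the current block (Proposition~\ref{prop:projective}), taking $O(\log b) = O(\log t) = O(\log T)$ bits after absorbing constants. Global progress counters (number of blocks completed, current block index $k \le T$, recursion depth $\le O(\log T)$) each fit in $O(\log T)$ bits, and the finite-control state of the simulator itself contributes only $O(1)$. Since the simulator of~\cite{HeightCompression} uses a fixed, finite collection of such registers and pointers, summing over them yields a bound of the form $c_{\mathrm{book}} \log T$ with $c_{\mathrm{book}}$ determined once and for all by the construction.

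The main obstacle, to the extent there is one, is conceptual rather than technical: one must verify that the black-boxed simulator from~\cite{HeightCompression} really maintains no hidden structure of super-logarithmic size outside its interval-summary storage and replay window. Once this is confirmed by inspection of the construction (noting in particular that merge-compatibility checks in Definition~\ref{def:merge} operate on already-stored summary components and do not require auxiliary scratch beyond $O(1)$ cells absorbed into category (2)), the lemma follows immediately by summing the component bounds above.
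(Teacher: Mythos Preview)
Your approach is essentially the same as the paper's: enumerate the bookkeeping components (current-node pointer, DFS stack/path markers, finite-state flags), bound each by $O(\log T)$ using that $\mathcal{T}$ has $O(T)$ nodes and depth $O(\log T)$, and sum. The paper's proof sketch does exactly this enumeration and nothing more.

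One small issue to fix: your chain $O(\log b) = O(\log t) = O(\log T)$ is not valid for general $b$. Since $T = \lceil t/b \rceil$, one has roughly $\log t \approx \log b + \log T$, so $\log t = O(\log T)$ fails whenever $b$ is large relative to $T$ (e.g., $b = t$, $T = 1$). The clean resolution is the one the paper implicitly takes: the local offset $\delta$ is part of the Algebraic Replay Engine's internal state, and by Theorem~\ref{thm:are} the entire ARE runs in $O(b)$ workspace, which is already accounted for under category~(2) (the replay window), not category~(3). Once you move $\delta$ out of bookkeeping, every remaining item on your list is genuinely $O(\log T)$ and the argument goes through for arbitrary $b$, matching the lemma as stated.
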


\begin{proof}[Proof sketch]
The simulator’s bookkeeping consists of:
\begin{itemize}
    \item a representation of the current node in $\mathcal{T}$ (which can be stored using $O(\log T)$ bits, as $\mathcal{T}$ has $O(T)$ nodes),
    \item a constant number of stack pointers or parent/child indicators for the DFS traversal (each representable with $O(\log T)$ bits),
    \item a constant number of finite-state flags indicating the current phase of the algorithm (e.g., “at leaf”, “ascending”, “descending”).
\end{itemize}
Since the number of such quantities is bounded by a constant independent of $t$, the total bookkeeping space is at most $c_{\mathrm{book}} \log T$ for some constant $c_{\mathrm{book}}$. For full details, see the simulation analysis in~\cite{HeightCompression}.
\end{proof}

Combining Lemma~\ref{lem:bookkeeping} with the global bound $S_{\mathrm{total}}(\tau) \le S(t)$ yields the inequalities used implicitly in the proof of Theorem~\ref{thm:1d_area_law}.

\begin{lemma}[Screen area versus total space]
\label{lem:screen-vs-total}
For all simulated times $\tau$ we have
\[
|\Phi(\tau)|
=
S_{\mathrm{screen}}(\tau)
\;\le\;
S_{\mathrm{total}}(\tau)
\;\le\;
S(t),
\]
and
\[
S_{\mathrm{book}}(\tau) \;=\; O(\log T).
\]
\end{lemma}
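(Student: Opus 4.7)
The plan is to derive Lemma~\ref{lem:screen-vs-total} as a direct consequence of the definitions set up in this appendix combined with Theorem~\ref{thm:sqrtbound} and Lemma~\ref{lem:bookkeeping}. No new combinatorial or algorithmic argument is required: the content has been isolated in the earlier results, and what remains is to verify that the quantities line up exactly as claimed and that the work-tape partition introduced at the beginning of the appendix is consistent.

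For the chain of inequalities I would proceed in three short steps. First, the equality $|\Phi(\tau)| = S_{\mathrm{screen}}(\tau)$ is the definition of the screen's area introduced immediately above the lemma. Second, the inequality $S_{\mathrm{screen}}(\tau) \le S_{\mathrm{total}}(\tau)$ follows from the defining identity $S_{\mathrm{total}}(\tau) = S_{\mathrm{screen}}(\tau) + S_{\mathrm{book}}(\tau)$ together with the trivial observation that $S_{\mathrm{book}}(\tau) \ge 0$. Third, the inequality $S_{\mathrm{total}}(\tau) \le S(t)$ is the global space bound of the simulator from Theorem~\ref{thm:sqrtbound}: by construction the three categories (boundary summaries, replay window, bookkeeping) partition the simulator's nonblank workspace at every simulated time, and the total nonblank workspace is at most $S(t)$ by hypothesis. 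The second assertion $S_{\mathrm{book}}(\tau) = O(\log T)$ is then a direct citation of Lemma~\ref{lem:bookkeeping}.

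Because all of the quantitative content has been absorbed into Theorem~\ref{thm:sqrtbound} and Lemma~\ref{lem:bookkeeping}, there is no genuine obstacle. The only point worth explicitly confirming is that categories (1), (2), (3) are indeed disjoint and jointly exhaustive for the simulator of~\cite{HeightCompression}, so that $S_{\mathrm{total}} = S_{\mathrm{screen}} + S_{\mathrm{book}}$ holds as an equality rather than merely as a one-sided bound. This follows from the memory layout of the square-root space simulator: the summary region, the replay window, and the bookkeeping region are allocated in disjoint tape segments, so no cell is double-counted and every nonblank cell belongs to exactly one of the three roles. A one-sentence statement to this effect, followed by the three-line chain of inequalities and an invocation of Lemma~\ref{lem:bookkeeping}, completes the proof.
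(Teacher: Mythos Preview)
Your proposal is correct and matches the paper's own proof essentially line for line: the paper also derives $S_{\mathrm{screen}}(\tau)\le S_{\mathrm{total}}(\tau)$ directly from the definition (screen cells are a subset of nonblank cells), takes $S_{\mathrm{total}}(\tau)\le S(t)$ as the definition of the simulator's space bound, and cites Lemma~\ref{lem:bookkeeping} for the $O(\log T)$ bookkeeping bound. Your extra remark about the disjoint partition is a harmless elaboration of what the paper states in one clause.
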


\begin{proof}
The first inequality $S_{\mathrm{screen}}(\tau) \le S_{\mathrm{total}}(\tau)$ holds by definition, since the screen cells are a subset of all nonblank work-tape cells. The second inequality $S_{\mathrm{total}}(\tau) \le S(t)$ is just the definition of the simulator’s space bound. The bound on $S_{\mathrm{book}}(\tau)$ is Lemma~\ref{lem:bookkeeping}.
\end{proof}

In particular, when $b$ is chosen on the order of $\sqrt{t}$, we have $T = \Theta(\sqrt{t})$, so $\log T = O(\log t)$ is asymptotically dominated by $\sqrt{t}$, and the $O(\sqrt{t})$ area law for $|\Phi(\tau)|$ is governed by the same scaling as the overall space bound.

\medskip

\section{Proofs of Holographic Statements}
\label{app:holographic-proofs}

In this appendix we give complete proofs of the new formal statements specific to the holographic reinterpretation developed in this paper. Throughout, we freely use the notation and constructions of the main text, and we treat Theorems~\ref{thm:heightcompression}, \ref{thm:are}, and~\ref{thm:sqrtbound} from the companion height-compression work~\cite{HeightCompression} as black-box inputs for brevity.

\subsection{Proof of Lemma~\ref{lem:bulk_vacuum}}

Recall the statement of Lemma~\ref{lem:bulk_vacuum}: there exists a constant $c$ (depending only on $M$, the universal machine $U$, and the fixed ARE implementation) such that for every block-respecting interval $[L,R]$, every internal time $\tau \in [L,R]$, and every choice of input, the configuration $C_\tau$ satisfies
\[
K\bigl(C_\tau \,\big|\, \sigma([L,R]), \tau\bigr) \;\le\; c,
\]
where $K(\cdot \mid \cdot)$ is the prefix-free Kolmogorov complexity with respect to $U$ and the encoding conventions of Appendix~\ref{app:encodings}.

\begin{proof}[Proof of Lemma~\ref{lem:bulk_vacuum}]
Fix the universal prefix machine $U$ and the encodings $\mathrm{enc}(\cdot)$ as in Appendix~\ref{app:encodings}. Fix also a deterministic multitape Turing machine $M$ together with a specific implementation of the Algebraic Replay Engine (ARE) from Theorem~\ref{thm:are}.

By Theorem~\ref{thm:are}, there exists a Turing machine $A$ (the ARE implementation) with the following property: for every block-respecting interval $[L,R]$, every $\tau \in [L,R]$, and every input, on input $(\sigma([L,R]),\tau)$ the machine $A$ outputs the configuration $C_\tau$ and uses $O(b)$ work-tape cells. The description (code) of $A$ is independent of the particular run, interval, or time index.

Consider now the following fixed prefix-free program $p^\star$ for the universal machine $U$:

\begin{quote}
On input $(x,y)$, interpret $x$ as $\mathrm{enc}(\sigma([L,R]))$ and $y$ as $\mathrm{enc}(\tau)$, simulate the machine $A$ on input $(\sigma([L,R]),\tau)$, and output the resulting configuration $C_\tau$ encoded as $\mathrm{enc}(C_\tau)$.
\end{quote}

The binary code of $p^\star$ contains (i) a description of $A$, and (ii) a small amount of wrapper logic instructing $U$ to simulate $A$ on its input and to encode the output configuration. Both components are independent of $[L,R]$, $\tau$, or the input to $M$. Let $\ell^\star$ be the length (in bits) of $p^\star$.

Now fix any block-respecting interval $[L,R]$, any internal time $\tau \in [L,R]$, and any input. Let $\sigma = \sigma([L,R])$ be the interval summary. Then when $U$ is run on program $p^\star$ with conditional data $(\mathrm{enc}(\sigma),\mathrm{enc}(\tau))$, by construction it outputs $\mathrm{enc}(C_\tau)$. Therefore, the conditional Kolmogorov complexity of $C_\tau$ given $(\sigma,\tau)$ satisfies
\[
K\bigl(\mathrm{enc}(C_\tau) \,\big|\, \mathrm{enc}(\sigma([L,R])), \mathrm{enc}(\tau)\bigr)
\;\le\;
\ell^\star + O(1),
\]
where the $O(1)$ term is the universal overhead for using $p^\star$ as a program on $U$.

By our shorthand convention $K(C_\tau \mid \sigma([L,R]), \tau)$ for this quantity, we obtain
\[
K\bigl(C_\tau \,\big|\, \sigma([L,R]), \tau\bigr)
\;\le\;
\ell^\star + O(1).
\]
Setting $c := \ell^\star + O(1)$ yields the desired bound. Note that $c$ depends only on $U$, $M$, and the fixed ARE implementation (via $A$), and not on $[L,R]$, $\tau$, or the particular run. This completes the proof.
\end{proof}

\subsection{Proof of Corollary~\ref{cor:global_bulk}}

Recall that for any block-respecting interval $[L,R]$, we write
\[
H_{[L,R]} := (C_L,\dots,C_R)
\]
for the corresponding configuration history, and we view $H_{[L,R]}$ as a single binary string $\mathrm{enc}(H_{[L,R]})$ obtained by concatenating the encodings $\mathrm{enc}(C_\tau)$ for $\tau = L,\dots,R$ in a fixed, self-delimiting way as described in Appendix~\ref{app:encodings}. The statement of Corollary~\ref{cor:global_bulk} is that there exists a constant $c'$ such that
\[
K\bigl(H_{[L,R]} \,\big|\, \sigma([L,R])\bigr) \;\le\; c'
\]
for all block-respecting intervals $[L,R]$, with $c'$ depending only on $M$, $U$, and the fixed ARE implementation.

\begin{proof}[Proof of Corollary~\ref{cor:global_bulk}]
Fix the universal prefix machine $U$, the deterministic multitape machine $M$, and the specific implementation of the Algebraic Replay Engine (ARE) from Theorem~\ref{thm:are}, exactly as in the proof of Lemma~\ref{lem:bulk_vacuum}. Let $A$ denote the underlying Turing machine implementation of the ARE.

We define a single prefix-free program $p^\dagger$ for $U$ as follows.

\begin{quote}
On input $x$, interpret $x$ as $\mathrm{enc}(\sigma([L,R]))$ for some block-respecting interval $[L,R]$. Decode the interval endpoints $(L,R)$ from this encoding (recall that by convention they are included in $\sigma([L,R])$; see Appendix~\ref{app:encodings}). Initialize an output buffer to the empty string. For each $\tau$ in the integer range $\{L,\dots,R\}$, do:
\begin{enumerate}
    \item Simulate $A$ on input $(\sigma([L,R]),\tau)$.
    \item Let $C_\tau$ be the configuration output by $A$, and append $\mathrm{enc}(C_\tau)$ to the output buffer using the fixed, self-delimiting encoding scheme for sequences described in Appendix~\ref{app:encodings}.
\end{enumerate}
When the loop terminates, output the contents of the buffer, which is exactly the encoding of $H_{[L,R]}$.
\end{quote}

The description (binary code) of $p^\dagger$ consists of:
\begin{itemize}
    \item a description of the fixed ARE implementation $A$, and
    \item a constant amount of wrapper logic instructing $U$ to decode $\sigma([L,R])$, recover $(L,R)$, iterate over $\tau = L,\dots,R$, invoke $A$ on each $\tau$, and concatenate the resulting encodings into the final output string.
\end{itemize}
Both components are independent of $[L,R]$, of the particular run, and of the specific boundary summary; they depend only on $M$, $U$, and the chosen implementation of $A$.

Let $\ell^\dagger$ be the length of $p^\dagger$ in bits. Then, for every block-respecting interval $[L,R]$ and its interval summary $\sigma([L,R])$, when $U$ is run on program $p^\dagger$ with conditional input $\mathrm{enc}(\sigma([L,R]))$, it outputs $\mathrm{enc}(H_{[L,R]})$. Therefore the conditional Kolmogorov complexity of $H_{[L,R]}$ given $\sigma([L,R])$ satisfies
\[
K\bigl(H_{[L,R]} \,\big|\, \sigma([L,R])\bigr)
\;=\;
K\bigl(\mathrm{enc}(H_{[L,R]}) \,\big|\, \mathrm{enc}(\sigma([L,R]))\bigr)
\;\le\;
\ell^\dagger + O(1),
\]
where the $O(1)$ term is the universal overhead for using $p^\dagger$ as a program on $U$. Setting $c' := \ell^\dagger + O(1)$ yields the desired bound, with $c'$ depending only on $M$, $U$, and the fixed ARE implementation.

Specializing to the full run interval $[0,t]$, we obtain $H_{[0,t]} = H_t$ and $\sigma([0,t])$ as in Section~\ref{sec:bulk}, so
\[
K\bigl(H_t \,\big|\, \sigma([0,t])\bigr) \;\le\; c',
\]
as claimed in the main text. This completes the proof of Corollary~\ref{cor:global_bulk}.
\end{proof}

\subsection{Proof of Theorem~\ref{thm:1d_area_law}}

Recall the statement of Theorem~\ref{thm:1d_area_law}: for a deterministic multitape Turing machine with one-dimensional work tape ($d=1$), there exists a simulation in which the maximum area of the active holographic screen satisfies
\[
\max_{\tau \in \{0,\dots,t\}} |\Phi(\tau)| \;=\; O\bigl(\sqrt{t}\bigr).
\]

This is essentially a reformulation of the square-root space simulation bound (Theorem~\ref{thm:sqrtbound}) in the geometric language of holographic screens, together with the memory layout invariants of Appendix~\ref{app:memory-layout}.

\begin{proof}[Proof of Theorem~\ref{thm:1d_area_law}]
Fix a deterministic multitape Turing machine $M$ with a one-dimensional work tape, and a time bound $t$. By Theorem~\ref{thm:sqrtbound}, there exists a simulation of the length-$t$ run of $M$ that uses at most
\[
S(t) \;=\; c \sqrt{t}
\]
work-tape cells over the fixed finite alphabet, for some constant $c$ depending only on $M$ and the simulation construction in~\cite{HeightCompression}.

For this simulator, define $S_{\mathrm{screen}}(\tau)$, $S_{\mathrm{book}}(\tau)$, and $S_{\mathrm{total}}(\tau)$ as in Appendix~\ref{app:memory-layout}. By Lemma~\ref{lem:screen-vs-total}, for all simulated times $\tau$ we have
\[
|\Phi(\tau)|
=
S_{\mathrm{screen}}(\tau)
\;\le\;
S_{\mathrm{total}}(\tau)
\;\le\;
S(t)
\;=\;
c \sqrt{t}.
\]

Taking the maximum over $\tau \in \{0,\dots,t\}$ yields
\[
\max_{\tau} |\Phi(\tau)|
\;\le\;
c \sqrt{t},
\]
which establishes the claimed $O(\sqrt{t})$ bound.

Finally, note that the bookkeeping space $S_{\mathrm{book}}(\tau) = O(\log T)$ from Lemma~\ref{lem:bookkeeping} is asymptotically dominated by the $\sqrt{t}$ term when $b$ is chosen on the order of $\sqrt{t}$ (so that $T = \Theta(\sqrt{t})$). Thus the area law is governed by the same square-root scaling as the underlying simulation space bound. This completes the proof.
\end{proof}

\subsection{Proof of Proposition~\ref{prop:projective}}

Recall the statement of Proposition~\ref{prop:projective}: there is a projective duality between the linear execution trace and the traversal of the static causal tree $\mathcal{T}$. Concretely, for each time index $\tau$ there is a unique pair (leaf, local offset) at which the simulator reconstructs $C_\tau$, and conversely each such pair corresponds to a unique configuration in the linear history.

\begin{proof}[Proof of Proposition~\ref{prop:projective}]
Let the length-$t$ run of $M$ be
\[
\mathcal{L} = (C_0, C_1, \dots, C_t).
\]
Fix a block size $b$ and the associated partition of the time indices into blocks
\[
I_k = \bigl[(k-1)b + 1,\, \min\{kb,t\}\bigr], \quad k = 1,\dots,T,
\]
where $T = \lceil t/b \rceil$. Each time index $\tau \in \{1,\dots,t\}$ lies in a unique block $I_k$.

The height-compression transformation (Theorem~\ref{thm:heightcompression}) produces a balanced computation tree $\mathcal{T}$ whose leaves correspond exactly to the blocks $I_1,\dots,I_T$. We write $\ell_k$ for the leaf node corresponding to block $I_k$.

\medskip
\noindent\textbf{Forward map (history $\to$ traversal).}
For each $\tau \in \{1,\dots,t\}$, define:
\begin{itemize}
    \item $\ell(\tau)$ to be the unique leaf $\ell_k$ such that $\tau \in I_k$, and
    \item $\delta(\tau)$ to be the offset of $\tau$ within $I_k$, for example
    \[
    \delta(\tau) \;=\; \tau - \bigl((k-1)b + 1\bigr),
    \]
    so that $0 \le \delta(\tau) < |I_k|$.
\end{itemize}
This defines a well-posed map
\[
f : \{1,\dots,t\} \to \{\text{leaf nodes of }\mathcal{T}\} \times \{0,\dots,b-1\}, \quad
f(\tau) = \bigl(\ell(\tau), \delta(\tau)\bigr),
\]
with the understanding that only the first $|I_k|$ offsets are used for leaf $\ell_k$ when $I_k$ is shorter than $b$.

\medskip
\noindent\textbf{Traversal behavior of the simulator.}
Consider now the square-root space simulator from Theorem~\ref{thm:sqrtbound}, instantiated with the height-compressed tree $\mathcal{T}$ and a fixed depth-first traversal order $\pi$ on the nodes of $\mathcal{T}$ (for concreteness, a standard pre-order DFS). By construction in~\cite{HeightCompression}:

\begin{enumerate}
    \item The traversal $\pi$ visits each leaf $\ell_k$ exactly once.
    \item When $\pi$ reaches leaf $\ell_k$, the simulator invokes the Algebraic Replay Engine on the summary $\sigma(I_k)$ to regenerate the configurations $C_\tau$ for all $\tau \in I_k$, in increasing order of $\tau$. The replay engine runs as a subroutine whose internal step counter identifies the current offset $\delta$ within $I_k$.
    \item The correctness proof in~\cite{HeightCompression} guarantees that the configurations output during replay coincide exactly with the original configurations in $\mathcal{L}$: when the subroutine reports the configuration at offset $\delta$ within $I_k$, that configuration is equal to $C_\tau$ where $\tau$ is the unique index in $I_k$ with that offset.
\end{enumerate}

Thus, during the execution of the simulator, each pair $(\ell_k,\delta)$ with $0 \le \delta < |I_k|$ is visited exactly once, and at that moment the simulator reconstructs the unique configuration $C_\tau$ such that $f(\tau) = (\ell_k,\delta)$.

This shows that the map
\[
g : \{\text{valid leaf-offset pairs visited along }\pi\} \to \{1,\dots,t\}, \quad
g(\ell_k,\delta) = \tau
\]
is well-defined (by correctness of the simulator) and is the inverse of $f$.

\medskip
\noindent\textbf{Bijection.}
By construction, $f$ and $g$ are inverses of one another:
\[
g\bigl(f(\tau)\bigr) = \tau \quad \text{for all } \tau \in \{1,\dots,t\},
\]
and
\[
f\bigl(g(\ell_k,\delta)\bigr) = (\ell_k,\delta)
\]
for every leaf-offset pair $(\ell_k,\delta)$ actually used during replay. Hence $f$ defines a bijection between the set of time indices and the set of leaf-offset pairs visited during the depth-first traversal $\pi$, and the simulator’s reconstruction of configurations along $\pi$ is projectively equivalent to the linear history $\mathcal{L}$.

This is exactly the claimed projective duality between the linear execution trace and the tree traversal.
\end{proof}

\newpage

\end{document}